\newtheorem{theorem}{Theorem}%[section]
\newtheorem{corollary}{Corollary}
\newcommand{\diag}{{\rm diag}}
\newcommand{\p}{{\Omega}}
\newcommand{\hc}{{{\bf h}^{\rm C}_u}}
\newcommand{\hs}{{{\bf h}^{\rm S}}}
\newcommand{\vc}{ {\rm vec}}
\newcommand{\rep}{ {\rm Re}}
\newcommand{\imp}{ {\rm Im}}
\newcommand{\rx}{ {\bf R}}
\newcommand{\pu}{ {\bf p}}
\newcommand{\pr}{ {\bf P}}
\begin{document}
\title{Waveform Optimization with Multiple Performance Metrics for Broadband Joint Communication and Radar Sensing}

\author{
{Zhitong Ni,~\IEEEmembership{Student~Member,~IEEE},
 J. Andrew Zhang,~\IEEEmembership{Senior~Member,~IEEE},\\
   Kai Yang,~\IEEEmembership{ Member,~IEEE},
    Xiaojing Huang,~\IEEEmembership{Senior~Member,~IEEE},\\
     and Theodoros A. Tsiftsis},~\IEEEmembership{Senior~Member,~IEEE}\\
\thanks{  Z. Ni, K. Yang are with the School of Information and Electronics, Beijing Institute of Technology, Beijing, 100081, China. Z. Ni is also with the Global Big Data Technologies Centre, University of Technology Sydney, NSW, 2007, Australia  (Emails: zhitong.ni@student.uts.edu.au, yangkai@ieee.org).}
\thanks{ J. Andrew Zhang and X. Huang are with the Global Big Data Technologies Centre, University of Technology Sydney, NSW, 2007, Australia (Emails:  Andrew.Zhang@uts.edu.au, xiaojing.huang@uts.edu.au).}
\thanks{T. A. Tsiftsis is with the Institute of Physical Internet, and with the school of Intelligent Systems Science and Engineering, Jinan University, Zhuhai Campus, Zhuhai, 519070, China. (Email: theo\_tsiftsis@jnu.edu.cn).}
}	
	
\maketitle

\begin{abstract}
Joint communication and radar sensing (JCAS) integrates communication and radar/radio sensing into one system, sharing one transmitted signal. In this paper, we investigate JCAS waveform optimization underlying communication signals, where a base station detects radar targets and communicates with mobile users simultaneously. We first develop individual novel waveform optimization problems for communications and sensing, respectively. For communications, we propose a novel lower bound of sum rate by integrating multi-user interference and effective channel gain into one metric that simplifies the optimization of the sum rate. For radar sensing, we consider optimizing one of two metrics, the mutual information or the Cram\'{e}r-Rao bound. Then, we formulate the JCAS problem by optimizing the communication metric under different constraints of the radar metric, and we obtain both closed-form solutions and iterative solutions to the non-convex JCAS optimization problem. Numerical results are provided and verify the proposed optimization solutions.
\end{abstract}

\begin{IEEEkeywords}
Joint communication and radar sensing (JCAS), mobile networks, radar-communications, broadband
\end{IEEEkeywords}

\section{Introduction}
Sharing many hardware and signal processing modules and transmitting one single signal, joint communication and radar sensing (JCAS) systems have shown great potential in many applications. One main application is for the fifth-generation (5G) cellular networks or beyond \cite{abdelhadi16,lushan5G}, where the transmitted signals are used for both providing communication services and sensing the surrounding environments. JCAS also brings a lot of interests in vehicular networks and self-driving \cite{kumari,daniel18,dokhanchi}, where signals used for communications between cars are also used for sensing the environment to detect objects and avoid collisions.

\subsection{Related Works}
With most traditional hardware components replaced by digital processing modules, JCAS-enabled systems can reduce the number of connected devices and save the frequency resources \cite{strum1,chiri17,liu1}. To utilize the resources completely, most developed JCAS systems transmit single waveforms \cite{liuyongjun}, except some works realizing JCAS functions via multiplexing technologies, such as time division multiplexing (TDM)  \cite{LHan11_gud2}, frequency division multiplexing (FDM) \cite{liuyongjun9}, or code division multiplexing (CDM) \cite{liuyongjun10}.
The resource utilization of single waveforms is highly efficient since the waveforms are shared by both communication and radar systems.
Such waveforms can be developed from classical radar waveforms \cite{liuyongjun12} or conventional communication waveforms \cite{sit2014mimo}.

Single waveform optimization has been reported in the literature.
In \cite{andrew19}, a multi-beam approach was proposed to flexibly generate JCAS sub-beams using analog antenna arrays. The optimization of the multi-beams was further investigated in \cite{Luo19}. This approach can adapt to varying channels but is suboptimal. In \cite{liumu}, the authors separated antenna arrays into two groups for realizing dual-function JCAS systems. The radar waveform falls into the nullspace of the downlink channel, such that the interference between radar signals and communication signals can be minimized. A multi-objective function was further applied to trade off the similarity between the generated waveform and the desired one in \cite{liuweight}. The multi-objective function in  \cite{liuweight} is a weighted sum of two individual optimal waveforms.
In \cite{libo}, the authors maximized the radar signal-to-interference-plus-noise ratio (SINR) with a given specific capacity of communication channels. The work in \cite{libo2,libo3} further introduced a sub-sampling matrix for radar as an objective function of the optimization. It is noted that all these works studied waveform optimization in a narrowband scenario and ignored the frequency diversity.

When optimizing JCAS waveforms,  different performance metrics for both communications and radar sensing can be used. The metrics for multiuser communication systems include multiuser interference (MUI) \cite{liuweight,sit} and  effective channel gain (ECG) \cite{niTVT20}. The MUI tolerance was analyzed for multi-input-multi-output orthogonal-frequency-division-multiplexing (MIMO-OFDM) JCAS systems in \cite{sit}, using the interleaved signal model in \cite{strum13}. The individual optimal communication waveform in \cite{liuweight} is also based on minimizing the MUI. However, simply minimizing MUI can cause the JCAS signals to fall into the nullspace of individual communication signals and result in a low SINR. To tackle this issue, the authors in \cite{liuweight} introduced an expected diagonal matrix that requires prior knowledge about the ECG.

As for radar sensing,  typically considered performance metrics include mutual information (MI) \cite{Liu17letter,inner12}, Cram\'{e}r-Rao bound (CRB) \cite{liuyongjun,Innerbound}, and minimum mean-squared error (MMSE) \cite{coexist11}. In \cite{Liu17letter}, MI for an OFDM JCAS system was studied, and the power allocation on subcarriers was investigated by maximizing the weighted sum of the MI of radar and the MI of communication. In \cite{inner12}, the authors developed an MI measure that jointly optimizes the performance of radar and communication systems that overlap in the same frequency band. The CRB and MMSE are also commonly used metrics for waveform optimization. In \cite{Innerbound}, the authors derived the performance bounds in a single antenna system. In \cite{coexist11}, the authors derived performance bounds for the radar estimation rate based on MMSE estimation bounds.

\subsection{Motivations and Contributions}
Although the above-mentioned metrics have been individually studied for JCAS systems, no links between them have been identified so far. Some of those papers can be invalid when the number of sensing targets is not equal to that of the communication users. For communications, the SINR is the main concern, while most papers do not address the SINR directly and use some indirect metrics. For radar sensing, we note that different metrics can be used but there are no parallel comparisons. Based on the limitations of prior works, we aim to build a common method that can utilize different metrics for both communications and radar sensing.

This paper aims to develop waveform optimization algorithms for broadband JCAS systems and establish connections between different metrics of optimization. We consider a practical JCAS system of deployment. JCAS systems, particularly those underlying communication signals, confront the major challenge of full-duplex, that is, transmitting and receiving at the same time using the same frequency channel \cite{fullduplex}.
Solutions bypassing full-duplex requirements were discussed in \cite{lushan,niICC}. Here, we adopt a single receive antenna dedicated to radar sensing, which is a low-cost and practical solution at the moment.

The key idea is to optimize a developed novel lower bound of SINR for communications under different constraints of the radar metrics. By studying optimization with respect to these metrics together, for the first time, we disclose the important connections and relationships between these metrics in JCAS systems.

The main contributions of this paper are summarized as follows.
\begin{itemize}
\item We derive multiple metrics, including MUI and ECG for communications, and MMSE, MI, and CRB for radar sensing, in a broadband JCAS system. In the adopted system, bypassing the full-duplex requirement, the base station (BS) uses one single receive antenna for collecting the signals reflected from targets.
\item We develop a novel lower bound of SINR as the communication metric, which contains both MUI and ECG. The developed lower bound simplifies the expression of SINR and makes it flexible for joint waveform optimization, involving both communication and radar metrics.
\item For radar sensing, we derive multiple metrics, including MMSE, MI, and CRB. We show the connections between optimizing these metrics.
\item We apply these multiple performance metrics and formulate two kinds of JCAS waveform optimization problems by minimizing the proposed communication metric under the constraint of the radar metric. We obtain the corresponding closed-form solutions under certain conditions. When the conditions are not satisfied, we propose iterative solutions based on Newton's method.
\end{itemize}

Notations: $\rm\bf a$ denotes a vector, $\rm\bf A$ denotes a matrix, italic English letters like $N$ and lower-case Greek letters $\alpha$ are a scalar. $|{\rm\bf A }|, {\rm\bf A }^T, {\rm\bf A }^H, {\rm\bf A }^*,$ and ${\rm\bf A }^\dag$ represent determinant value, transpose, conjugate transpose, conjugate, and pseudo inverse, respectively. We use ${\rm diag}({\bf a})$ to denote a diagonal matrix with diagonal entries being the entries of ${\bf a}$.  $\|{\rm\bf A }\|_F$ and  $[{\rm\bf A }]_{ N}$ represent the Frobenius norm and the $N$th column of a matrix, respectively.

\section{System and Channel Models}

\begin{figure}[t]
    \centering
    \includegraphics[scale=0.8]{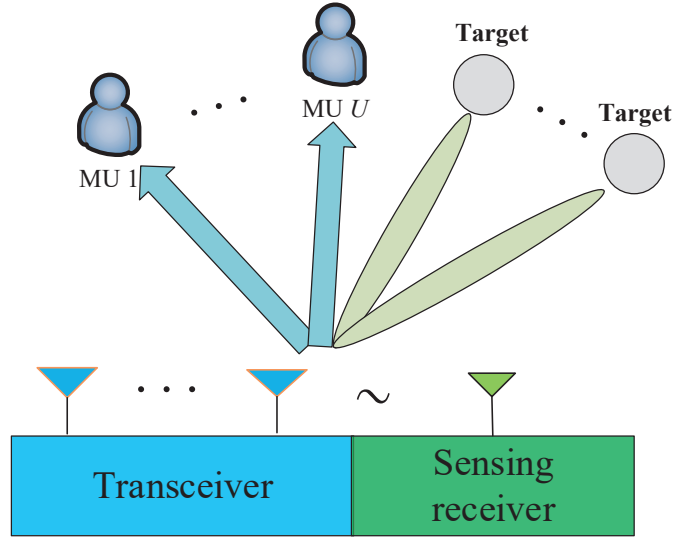}
    \caption{Illustration of JCAS systems with downlink sensing. The BS uses a single-antenna receiver dedicated for sensing the reflected downlink signals.}
    \label{model}
\end{figure}

We consider an OFDM-based JCAS system supporting multiuser communications, as well as active downlink sensing. As shown in Fig. \ref{model}, the sensing is conducted at the BS, where an $N\times 1$ uniform linear array (ULA) is employed for simultaneous communications and radar sensing. Since the full-duplex technologies are not mature yet, we consider a low-cost alternative setup, where a single antenna, sufficiently separated from and synchronized with the transmit array, is used for solely collecting the reflected signals and detecting targets. Note that the sensing parameters can still be effectively estimated by using the received signal from the single-antenna receiver, except for the angle-of-arrivals (AoAs). The angle-of-departures (AoDs) can be obtained and serve a similar purpose as the AoAs.

The OFDM signal consists of $K$ subcarriers with the subcarrier interval being $\Delta f=1/T$, where $T$ is the length of one OFDM symbol exclusive of the cyclic prefix (CP) of length $T_{\rm C}$. Letting ${\bf s}[k,m]$ denote a $U\times 1$ data symbol vector on subcarrier $k$ of the $m$th time slot, we assume the data symbols are statistically independent, i.e., ${\mathbb E}({\bf s}[k,m]{\bf s}^H[k,m])=P {\bf I}$ with $P$ being the transmit power. A digital precoder with the dimension of $N\times U$, $\pr[k,m]$, is applied to ${\bf s}[k,m]$, such that  $\|\pr[k,m]\|_F^2=U$. The precoded data symbol vector is written as
\begin{align}
&{\bf x}[k,m]= \pr[k,m]{\bf s}[k,m], \notag\\
&k\in\{0,\cdots,K-1\}, m\in\{0,\cdots,M-1\}.
\end{align}
An inverse fast Fourier transform (IFFT) across the frequency band is then applied to the elements of ${\bf x}[k,m]$ from $k=0$ to $k=K-1$ and transforms signals into time domain. The time-domain signals are transmitted by the $N\times 1$ ULA.

The BS conducts downlink communications with $U$ multiple users (MUs) that are also potential targets. Each MU has a single receive antenna. It is noted that the system setup is equivalent to a multi-input-single-output (MISO) system. From \cite{niICC} that adopts the same system setup, assembling $m$ from $0$ to $M$ would obtain highly accurate estimates of parameters. Hence, we assume that Doppler frequency approximately equals zero from $m=0$ to $m=M-1$ and make the system and channel model invariant to $m$. Let us send a data block ${\bf S}[k]=\big[{\bf s}[k,0],\cdots, {\bf s}[k,M-1]\big]$ of length $M$ at the baseband of BS. The transmitted signals go through a beam-space multi-path channel. The received signal for each MU is sampled at the interval of $T/K$ and converted to the frequency domain using $K$-point FFT's after removing CP. The frequency-domain received signal block at the $u$th MU is
\begin{align}\label{eq5}
{\bf r}_u[k] = (\hc[k])^H \pr[k] {\bf S}[k] + {\bf n}_u^1[k],
\end{align}
where $ {\bf S}[k]$ is the data block, ${\bf n}_u^1[k]$ is a complex AWGN vector with zero mean and covariance matrix of $\sigma_1^2{\bf I}_M$, and $\hc[k]$ is a communication geometric channel, given by
\begin{align}\label{commch}
\hc [k] =  \sum\limits_{l_u=1}^{L_u}\beta_{l_u}{\bf a}(\Phi_{l_u}) e^{-j2\pi k\frac {\tau_{l_u}} T},
\end{align}
where $\beta_{l_u}$ is the channel gain of the $l_u$th path between BS and MU $u$, $\Phi_{l_u}=\pi\sin(\varphi_{l_u})$ is the equivalent AoD with $\varphi_{l_u}$ being the actual AoD, ${\bf a}(\cdot)$ is an $N\times 1$ normalized ULA array response vector, and $\tau_{l_u}$ is the time delay.

Meanwhile, the BS utilizes the single-antenna receiver to perform radar sensing. We assume that the transmitted signals from $N$ antennas impinge on $L$ targets with the time delays of $\{\tau_l\}_{l=1}^L$. Let us denote the sensing channel as
\begin{align}
\hs[k] & = \sum\limits_{l=1}^L \alpha_l{\bf a}(\p_l) e^{-j2\pi k\frac{\tau_l} T}\notag\\
&={\bf A}{\bf X}{\bf W}[k]{\bf 1},
\end{align}
where $\alpha_l$ is the path loss coming from the $l$th target, $\p_l=\pi\sin(\omega_l)$ is the equivalent AoD with $\omega_l$ being the AoD, $\tau_l$ is the delay of the $l$th target, ${\bf A}=[{\bf a}(\p_1),\cdots,{\bf a}(\p_L)]$, ${\bf X}=\diag(\alpha_1,\cdots, \alpha_L  )$, ${\bf W}[k]=\diag([e^{-j2\pi k\frac{\tau_1} T},\cdots, e^{-j2\pi k\frac{\tau_L} T}])$, and  ${\bf 1}$ is an $L\times1$ vector with all entries being $1$s.
We assume that the entries of radar channel are  independent and identically distributed (i.i.d.).
In \cite{niICC},  a parameter sensing algorithm has been proposed to estimate the parameters using the adopted setup. Hence, in this paper, we assume that both radar and communication channels are known at the BS by using the estimation algorithms in \cite{niICC}, and focus on the waveform optimization.

The received frequency-domain signal at the BS for radar sensing is written as
\begin{align}\label{eq4}
 {\bf r}[k]=(\hs[k])^H{\bf P}[k]{\bf S}[k] +{\bf n}[k],
\end{align}
where ${\bf n}[k]$ is a $1\times M$ complex additive-white-Gaussian-noise (AWGN) vector with zero mean and covariance matrix of $\sigma^2{\bf I}_M$.
\section{Individual Performance Metrics}\label{Ind}
In this section, we will derive the performance metrics for communications and sensing individually. For communications, due to the non-convex nature of SINR, we propose and maximize a lower bound of SINR that is up to the MUI and the ECG of MUs. For radar sensing, we consider three different metrics, MMSE, MI, and CRB, and aim to disclose the links between these metrics.

\subsection{Metric for Communications}
The sum rate or the SINR is the main concern for communications. However, the SINR has a non-convex form and can be difficult to optimize, especially when JCAS functions are required. In JCAS waveform optimization problems, both MUI and ECG have a dominant impact on the sum rate of communications. The MUI is on the denominator of SINR and can be expressed as
\begin{align}\label{MUI1}
{\rm MUI} & = \sum\limits_{k=1}^K\sum_{u=1}^U \sum_{v\neq u}^U \left|({\bf h}^{\rm C}_v[k])^H\pu_u[k]\right|^2\notag\\
          & =  \sum\limits_{k=1}^K\left\|{\bf H}^H[k]{\pr[k]}-{\rm diag}({\bf H}^H[k]{\pr[k]})\right\|_F^2\notag\\
          &\triangleq \sum\limits_{k=1}^K \sum\limits_{u=1}^U I_{k,u},
\end{align}
where ${\bf H}[k]=[{\bf h}^{\rm C}_1[k],\cdots,{\bf h}^{\rm C}_U[k]]$, $\pu_u[k]$  is the $u$th column of $\pr[k]$, and $I_{k,u}$ denotes as the MUI of the $u$th MU on the $k$th subcarrier. Traditional optimization methods that mitigate MUI for communications only are not appropriate for JCAS problems, since only minimizing MUI can cause that the precoding vectors fall into the nullspace of ${\bf H}[k]$, and result in a low ECG.

The ECG of MU-MISO systems is on the nominator of the SINR and can be written as
\begin{align}\label{ECG1}
{\rm ECG}  &=\sum\limits_{k=1}^K\left\| {\rm diag}({\bf H}^H[k]{\pr[k]})\right\|_F^2\notag\\
          &\triangleq\sum\limits_{k=1}^K \sum\limits_{u=1}^U S_{k,u},
\end{align}
where $S_{k,u}$ denotes the ECG of the $u$th MU on the $g$th subcarrier.
To make the following JCAS problems flexible and easy to solve, we develop a regulated bound of SINR as
\begin{align}\label{JJ}
J =&{\rm ECG}- \mu {\rm MUI}\notag\\
=&\sum\limits_{k=1}^K \sum\limits_{u=1}^U S_{k,u}-\mu I_{k,u}\notag\\
 =&\sum\limits_{k=1}^K \sum\limits_{u=1}^U  \left(\left\|(\hc[k])^H{\pu_u}[k]\right\|_F^2-\mu \left\|\tilde {\bf H}_u^H[k]{\pu}_u[k]\right\|_F^2\right)\notag\\
 =&-\sum\limits_{k=1}^K \sum\limits_{u=1}^U  \pu_u^H[k]\rx_u[k]\pu_u[k],
\end{align}
where $\tilde {\bf H}_u[k]=\left[{\bf h}^{\rm C}_1[k],\cdots, {\bf h}^{\rm C}_{u-1}[k],{\bf h}^{\rm C}_{u+1}[k], \cdots, {\bf h}^{\rm C}_U[k]\right]$, $\rx_u[k]=\mu\tilde{\bf H}_u[k]\tilde{\bf H}_u^H[k]-\hc[k](\hc[k])^H$, and $\mu$ is a weighting coefficient to balance between ECG and MUI.

The defined regulated bound, $J$, can be seen as a lower bound of SINR when two conditions are satisfied.  The first condition is $S_{k,u}>\mu I_{k,u}$. The second condition is $I_{k,u}+U\sigma_1^2/P\leq 1$.  Generally, $U\sigma_1^2/P$ can be neglected at high SNRs. Hence, both conditions can be satisfied at high SNRs by letting $\pr[k]=({\bf H}[k])^\dag$, that is, we need $I_{k,u}$  to be zero.

\begin{theorem}\label{T1}
 $J$ is a lower bound of SINR when $I_{k,u}+U\sigma_1^2/P\leq1$ and $S_{k,u}>\mu I_{k,u}$, $\forall u,k$.
\end{theorem}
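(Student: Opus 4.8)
The plan is to reduce Theorem~\ref{T1} to an elementary per-subcarrier, per-user inequality and then sum over $(k,u)$. First I would write the SINR explicitly from the signal model in \eqref{eq5}: for the $u$th MU on subcarrier $k$ the useful signal power is proportional to $S_{k,u}=\|(\hc[k])^H\pu_u[k]\|_F^2$, the interference power is carried by the MUI terms defined in \eqref{MUI1}, and the thermal noise has power $\sigma_1^2$. After dividing through by the transmit power $P$ and collecting the noise across the $U$ single-antenna receivers — and using that summing over users equates total interference \emph{caused} with total interference \emph{received}, so that $I_{k,u}$ is the quantity that survives — the expression to be bounded takes the rational form $S_{k,u}/(I_{k,u}+U\sigma_1^2/P)$. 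This already exposes exactly the two quantities appearing in the hypotheses.

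The heart of the argument is then the scalar inequality $S_{k,u}/(I_{k,u}+U\sigma_1^2/P)\ge S_{k,u}-\mu I_{k,u}$. The cleanest route is a two-step comparison: by the hypothesis $I_{k,u}+U\sigma_1^2/P\le 1$ together with $S_{k,u}\ge 0$, dividing a nonnegative number by something no larger than one can only increase it, so $\mathrm{SINR}_{k,u}\ge S_{k,u}$; since $\mu\ge 0$ and $I_{k,u}\ge 0$, we also have $S_{k,u}\ge S_{k,u}-\mu I_{k,u}$, and chaining the two gives the bound. Equivalently, cross-multiplying by the positive denominator reduces the claim to $S_{k,u}(1-I_{k,u}-U\sigma_1^2/P)+\mu I_{k,u}(I_{k,u}+U\sigma_1^2/P)\ge 0$, in which the first term is nonnegative precisely because $I_{k,u}+U\sigma_1^2/P\le 1$ and the second is a product of nonnegative factors. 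I would record this second form, since it shows transparently where the hypothesis enters.

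Summing the per-term inequality over all $k$ and $u$ then yields $\sum_{k,u}\mathrm{SINR}_{k,u}\ge\sum_{k,u}(S_{k,u}-\mu I_{k,u})=J$, which is the assertion. The remaining hypothesis $S_{k,u}>\mu I_{k,u}$ does not enter the inequality itself; its role is to guarantee that each summand $S_{k,u}-\mu I_{k,u}$, and hence $J$, is strictly positive, so that $J$ behaves like a genuine positive lower bound worth maximizing rather than a vacuous or negative one. As a feasibility check I would note that the zero-forcing choice $\pr[k]=(\mathbf H[k])^\dag$ forces $I_{k,u}=0$, whereupon the first condition collapses to $S_{k,u}>0$ and the second to $U\sigma_1^2/P\le 1$, both holding at high SNR.

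The one place needing genuine care — and where I expect the only real friction — is the derivation of the rational SINR form in the first step, not the inequality. The MUI metric $I_{k,u}$ in \eqref{MUI1} measures the interference that user $u$'s beam \emph{leaks onto the other users}, whereas the SINR of user $u$ naturally contains the interference that user $u$ \emph{suffers} from the others' beams; reconciling these via the sum-preserving reindexing, and correctly tracking the factor $U$ on the noise arising from the normalizations $\|\pr[k,m]\|_F^2=U$ and $\mathbb E(\mathbf s[k,m]\mathbf s^H[k,m])=P\mathbf I$, is the delicate bookkeeping. Once the SINR is written as $S_{k,u}/(I_{k,u}+U\sigma_1^2/P)$, everything downstream is routine algebra.
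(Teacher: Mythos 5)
Your proof is correct and follows the same skeleton as the paper's: write the per-user, per-subcarrier SINR in the rational form $S_{k,u}/(I_{k,u}+U\sigma_1^2/P)$, prove the termwise inequality against $S_{k,u}-\mu I_{k,u}$, and sum over $(k,u)$. The difference lies in how the termwise inequality is established. The paper chains $I_{k,u}+U\sigma_1^2/P\le 1\le S_{k,u}/(S_{k,u}-\mu I_{k,u})$ and cross-multiplies, which forces it to invoke the hypothesis $S_{k,u}-\mu I_{k,u}>0$ merely to make the second ratio well defined. Your two-step chain $\mathrm{SINR}_{k,u}\ge S_{k,u}\ge S_{k,u}-\mu I_{k,u}$ (division by a positive quantity at most one, then dropping $\mu I_{k,u}\ge 0$) uses only the first hypothesis plus nonnegativity, is more elementary, and correctly exposes that $S_{k,u}>\mu I_{k,u}$ plays no role in the inequality itself --- it only guarantees that $J$ is positive and hence a non-vacuous bound (indeed, when $S_{k,u}-\mu I_{k,u}\le 0$ the termwise bound holds trivially). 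Your cross-multiplied certificate $S_{k,u}(1-I_{k,u}-U\sigma_1^2/P)+\mu I_{k,u}(I_{k,u}+U\sigma_1^2/P)\ge 0$ is also correct and is arguably the cleanest way to record where each hypothesis enters. One caution on the step you yourself flagged as delicate: the ``sum-preserving reindexing'' between interference \emph{caused} and interference \emph{received} does not actually go through for sums of ratios --- permuting the interference terms inside the denominators changes $\sum_{k,u}\mathrm{SINR}_{k,u}$ even though the total interference is unchanged. The paper never performs such a reindexing; it simply \emph{defines} $\mathrm{SINR}_{k,u}$ with the leakage quantity $I_{k,u}$ in the denominator (so it is really a signal-to-leakage-plus-noise ratio). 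Your argument therefore proves exactly the statement the paper proves, but, like the paper's, it does not extend to the SINR defined with received interference, and you should not present the reindexing as if it closed that gap.
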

\begin{proof}
The proof is shown in Appendix \ref{App0}.
\end{proof}
Theorem \ref{T1} unfolds that the defined metric, $J$, is a lower bound of SINR as long as the SNR is sufficiently large and the scaled MUI is less than the ECG.
We also note that, when the noise term is zero, the SINR is no smaller than $KU\log_2(1+\mu)$. Hence, $\mu$ should be much larger than one at high SNRs to guarantee a high SINR.

By using $J$ to jointly optimize MUI and ECG, the individual waveform optimization problem for communications is formulated as
\begin{align}\label{Idicomm}
{\rm \arg}\, & \mathop{\min}\limits_{\pr[k]}  - J =\sum\limits_{k=1}^K \sum\limits_{u=1}^U  \pu_u^H[k]\rx_u[k]\pu_u[k]\notag\\
{\rm s.t.}& \|\pr[k]\|_F^2 \leq U\notag\\
&S_{k,u}-\mu I_{k,u}>0\notag\\
&I_{k,u}+U\sigma_1^2/P\leq 1.
\end{align}
The minimization of $-J$ is still a non-convex problem since $\rx_u[k]$ is not semi-definite, but we note that $J$ has a much simpler expression than the SINR. Hence, When multiple metrics including both communication and radar are considered, $J$ can be a good substitute for the SINR.
By performing the eigen-decomposition of $\rx_u[k]$, i.e., $\rx_u[k]={\bf V}_u[k]{\bf E}_u[k]{\bf V}_u[k]^H$,  we note that there exists one negative diagonal entry in ${\bf E}_u[k]$. The optimal solution of $\pu_u[k]$ should be the eigenvector with the corresponding eigenvalue being negative.

\subsection{MI and MMSE for Sensing}\label{MIOP}
MI and MMSE are widely-used metrics in radar systems. Maximizing MI enables the maximization of information in the received signal, which contains the parameters of targets during its propagation. Minimizing MMSE is to obtain the parameters with the highest accuracy. With obtaining the accurate information about the targets, the parameters related to sensing can be extracted using various sensing algorithms.

The MI of the received OFDM symbols for sensing can be represented as
 \begin{align}
{\rm MI}=& \sum\limits_{k=0}^{K-1} \tilde H( {\bf y}[k]; \hs [k]|{\bf S}[k] ) \notag\\
=&\sum\limits_{k=0}^{K-1}\log_2 \left(\left|{\bf I}_U+\frac{ P\pr^H[k]\hs[k](\hs[k])^H\pr[k]}{\sigma^2}\right|\right)\notag\\
=& \sum\limits_{k=0}^{K-1}  \tilde H(  \hs [k];{\bf y}[k]|{\bf S}[k] )\notag\\
=&\sum\limits_{k=0}^{K-1}\log_2 \left( 1+\frac{ P(\hs[k])^H\pr[k]\pr^H[k]\hs[k]}{\sigma^2} \right)\label{convMI}.
 \end{align}
From \eqref{convMI}, it is clear that maximizing MI is a convex problem and the optimal $\pr[k]$ satisfies that $\hs[k]=[{\bf U}_{\rm P}[k]]_1$, where ${\bf U}_{\rm P}[k]$ is the left singular matrix of $\pr[k]$, that is, any $\pr[k]$ satisfying $[{\bf U}_{\rm P}[k]]_1=\hs[k]$ is the optimal solution. The globally optimal $\pr[k]$ should maximize the first singular value of $\pr[k]$. Hence, the optimal MI precoder has a rank of one. To make the precoder of dimension $N\times U$ have only one rank, the optimal MI precoder is obtained as
\begin{align}
\pr^\star_{\rm MI}[k]&= \hs[k]{\bf \Lambda}[k],
\end{align}
where ${\bf \Lambda}[k]$ is an arbitrary $1\times U$ matrix satisfying $\|{\bf \Lambda}[k]\|_F^2=\frac{U}{\| \hs[k]\|_F^2}$.
Note that the  size of the precoder is determined by the number of users instead of number of targets, since the number of targets can be varied from time to time.

The MMSE can be written as
 \begin{align}
{\rm MMSE} = \|\hs[k]-{\hat{\bf h}}^{\rm S}[k]\|^2,
 \end{align}
where ${\hat{\bf h}}^{\rm S}[k]$ is the estimated radar channel.
Referring to the derivations in \cite{guo2005mutual}, when $\hs[k]$ is Gaussian distributed, we have the result of $\frac {\partial {\rm MI}}{\partial P}=\frac12 {\rm MMSE}$. Hence, the larger the transmit power is, the less the MMSE becomes. Due to the existence of noise, the MMSE reaches its minimum value that is larger than zero. Since MMSE is related to different sensing algorithms, we will use MI instead of MMSE as one radar metric for the following JCAS problem formulation.

\subsection{CRB for Sensing}\label{CRBOP}

CRB is a theoretical lower bound of MMSE. Note that CRB is not up to the sensing algorithms and generally cannot be achieved.
In our adopted broadband MISO systems, we mainly consider the accuracy of estimates for delays, complex gains, and AoDs of the targets. The Doppler frequency is assumed to be zero in the training period. Note that the BS system setup is not a typical mono-static radar. We assume that it is an approximate mono-static radar for the convenience of deriving the CRB. We rewrite the received radar signal of \eqref{eq4} into the form of vector, i.e.,
\begin{align}
{\bf y}^T[k]&= (({\hs}[k])^H\pr[k]{\bf S}[k]+{\bf n}[k])^T\notag\\
&=({\bf 1}^H{\bf W}^H[k]{\bf X}{\bf A}^H\pr[k]{\bf S}[k])^T+{\bf n}^T[k]\notag\\
&={\bf S}^T[k]\vc({\bf 1}^H{\bf W}^H[k]{\bf X}{\bf A}^H\pr[k])+{\bf n}^T[k]\notag\\
&\triangleq {\bf S}^T[k] {\bf q}[k] + {\bf n}^T[k],
\end{align}
where   ${\bf q}[k]= \pr^T[k]{\bf A}^*{\bf X}^*{\bf W}^*[k]{\bf 1}$ denotes the equivalent radar channel. Note that ${\bf q}[k]$ contains all parameters of interests, i.e., delays, complex gains, and AoDs, and contains the precoder to be determined. For convenience, we represent all parameters collectively using one common real-valued vector ${\bf \Theta}=\big[\rep [\alpha_1,\cdots, {\alpha}_L],\imp[\alpha_1,\cdots,\alpha_L],\tau_1,\cdots,\tau_L, \p_1,\cdots, \p_L\big]^T$. It is noted that there are $4L$ parameters in total. Hence, ${\bf q}[k]$ is a function with $4L$ variables.

The CRB matrix, denoted as ${\bf B}$, contains all lower bounds of the estimates. The $(i,j)$th entry of ${\bf B}$ denotes the impact of the parameter $i$ on parameter $j$. The total CRB can be expressed as the Frobenius norm of ${\bf B}$. The CRB matrix is the inverse matrix of Fish information matrix (FIM), i.e., ${\bf B}={\bf F}^{-1}$.  Referring to the derivations of \cite{perfoB} and using the  so-called {\textit{ DOD/DOA Model}} defined in \cite{perfoB}, we obtain our corresponding FIM as
\begin{align}
{\bf F}=\left[\begin{array}{cccc}
\rep[{\bf F}_1]&-\imp[{\bf F}_1]&\rep[{\bf F}_2]&\rep[{\bf F}_3]\\
\imp[{\bf F}_1]& \rep[{\bf F}_1]&\imp[{\bf F}_2]&\imp[{\bf F}_3]\\
\rep[{\bf F}_2^T]& \imp[{\bf F}_2^T]&\rep[{\bf F}_4]&\rep[{\bf F}_5]\\
\rep[{\bf F}_3^T]& \imp[{\bf F}_3^T]&\rep[{\bf F}_5^T]&\rep[{\bf F}_6]\\
\end{array}\right],
\end{align}
where ${\bf F}_1,\cdots,{\bf F}_5$, and ${\bf F}_6$ are given by Appendix \ref{App1}. The derivation can be referred to \cite{perfoB}.

From Appendix \ref{App1}, we note that the CRB matrix depends on the covariance matrix, $ {\bf Q}[k]=\pr[k]\pr^H[k]$, only. Therefore, the CRB-based optimization problem is equivalent to optimizing ${\bf Q}[k]$. As we mentioned above, the total CRB equals the Frobenius norm of  ${\bf B}$, which is equivalent to the sum of the singular values of ${\bf B}$.
We aim to minimize the largest eigenvalue of the CRB matrix as in \cite{CRB_LuzhouXu}. Minimizing the largest eigenvalue of the CRB matrix is equivalent to maximizing the smallest eigenvalue of the FIM. The waveform optimization under the eigen-optimization criterion can be directly formulated as a semi-definite-programming (SDP),
\begin{align}\label{FIMop}
{\rm \arg}\, &\mathop{\min} \limits_{{\bf Q}[k]}  -t \notag\\
{\rm s.t.}  &  {\bf F} \succeq t{\bf I}, {\bf Q}[k]\succeq {\bf 0}, {\rm Tr}({\bf Q}[k])\leq U,
\end{align}
where $t$ is an auxiliary variable. However, the problem above is still difficult to address. We attempt to simplify the problem of \eqref{FIMop}. From Appendix \ref{App1}, we note that the norm of ${\bf F}_4$ has an order of $k^2$, which is significantly larger than that of other matrices in ${\bf F}$. Hence, we can approximately assume that ${\bf F}$ equals $\rep[{\bf F}_4]$, that is, only the accuracy of delays is counted.

\textit{Remark 1:} Comparing MI and CRB, we note that they correspond to two types of optimization metrics. In the MI metric, the optimal waveform matrix can be obtained, while in the CRB metric, the optimal covariance waveform matrix can be obtained. For JCAS problem formulation, these two types of metrics also appear in many other metrics, such as channel capacity, desired semi-definite covariance matrix \cite{liuweight}, and waveform similarity \cite{multimetric}. Hence, we will use these two types of optimal radar metrics and the proposed communication lower bound to formulate JCAS problems.

\section{MI Constrained JCAS Waveform Optimization}\label{sec-JCAS}
In this section, we formulate optimization problems that jointly consider radar performance and communication performance.  Note that in the section \ref{Ind}, we have obtained one communication metric that contains both MUI and ECG, and obtained two radar metrics, one for MI and one for CRB. We will combine the communication metric with one of the radar metrics.

\subsection{Closed-Form Solution}
In this subsection, we constrain the MI of radar and simultaneously maximize the  proposed lower bound of SINR, i.e., $J$. Since the optimal MI is obtained via section \ref{MIOP}, we can define a threshold of MI, $\tilde H_0$, that is smaller than the maximal MI. The MI-constrained JCAS optimization problem is formulated as
 \begin{align}\label{MICONs}
{\rm \arg}\, \mathop{\min}\limits_{\pr[k] }& - J  =\sum\limits_{k=1}^K \sum\limits_{u=1}^U \pu_u[k]^H\rx_u[k] \pu_u[k]\notag\\
& =\sum\limits_{k=1}^K \sum\limits_{u=1}^U \mu I_{k,u}-S_{k,u} \notag\\
{\rm s.t.}& \|\pr[k]\|_F^2\leq U,\notag\\
&\tilde H(  \hs [k];{\bf y}[k]|{\bf S}[k] )\geq \tilde H_0, \notag\\
&  \mu I_{k,u}-S_{k,u}< 0,\notag\\
&I_{k,u}+U\sigma_1^2/P\leq 1.
\end{align}
Note that minimizing the objective function above is not a convex problem, since $\rx_u[k]$ in $J$ is not a semi-definite matrix. Hence, the problem above cannot be solved directly using convex-optimization toolboxes.

Since MI maximization is a convex problem, the range of $\pr[k]$ that satisfies $\tilde H(\hs [k];{\bf y}[k]|{\bf S}[k] )\geq H_0$ should be a convex set. Hence, we can directly transform the MI constraint set into another convex set. For the convenience of following proposed algorithm, we adopt the convex set that limits the Euclidean distance between the JCAS precoder and the optimal MI precoder, i.e.,
\begin{align}
\|\pr[k]- \pr_{\rm MI}^\star[k]\|_F^2\leq \rho,
\end{align}
where $\rho$ is a threshold to constrain the Euclidean distance between the JCAS precoder and the optimal MI precoder. The transformed constraint is a complex sphere.

For the third constraint, we note that it has the same expression as the term in the objective function. As long as we can find an initial value of $\pr[k]$ satisfying $\mu I_{k,u}-S_{k,u}<0$, we can use iterative solutions to make the objective function keep dropping.

For the fourth constraint, we note that the norm of ${\bf H}[k]$ can be an arbitrary value. Hence, as long as $U\sigma^2/P<1$, we can make $I_{k,u}\leq 1-U\sigma_1^2/P$ by suppressing the norm of ${\bf H}[k]$.
Hence, we omit the fourth constraint and transform the JCAS optimization problem of \eqref{MICONs} into
\begin{align}\label{joi}
 {\rm \arg} &\, \mathop{\min}\limits_{\pr[k]} - J<0\notag\\
{\rm s.t.} &\|\pr[k]\|_F^2\leq U,\notag\\
 &\|\pr[k]- \pr_{\rm MI}^\star[k]\|_F^2\leq \rho,
\end{align}
where $\pr_{\rm  MI}^\star[k]=\hs[k]{\bf\Lambda}[k]$.
Note that all variables are subcarrier dependent. For notational simplicity, we omit the parameter $k$ in the following of this subsection.

We define the two constraint functions in \eqref{joi} as $\psi(\pr)=\|\pr\|_F^2=\sum\limits_{u=1}^U\|\pu_u\|_F^2$ and $\psi'(\pr)=\|\pr- \pr_{\rm rad,MI}^\star \|_F^2=\sum\limits_{u=1}^U \|\pu_u - \hs\lambda_u\|_F^2$, where $\lambda_u$ is the $u$th diagonal entry of ${\bf \Lambda}$.

Before optimizing $\pr$, we note that ${\bf \Lambda}$ in $\psi'(\pr)$ is still undetermined.
It is clear to see that $\psi(\pr)=U$ and $\psi'(\pr)=\rho$ are two complex spheres with the sphere center being ${\bf 0}$ and $\hs{\bf\Lambda}$, respectively. The objective function of $J$ can be seen as a saddle surface with the saddle point being the original point. The gradient of $J(\pr)$ is given by
\begin{align}
\frac{\partial J}{\partial \pu_u}=2\rx_u\pu_u.
\end{align}
We determine ${\bf \Lambda}$, such that the absolute value of gradient is maximized, i.e.,
\begin{align}\label{LAM}
 {\rm \arg} &\, \mathop{\max}\limits_{\bf\Lambda}  \sum\limits_{u=1}^U\|2\rx_u\hs\lambda_u\|_F^2\notag\\
{\rm s.t.} &  \|{\bf \Lambda} \|_F^2=\frac{U}{\| \hs \|_F^2}.
\end{align}
By using Least Square method, ${\bf \Lambda}$ is obtained as $\lambda_u^\star= c\|\rx_u\hs\|_F$, where $c$ is a scaling factor, such that $\|{\bf \Lambda}^\star \|_F^2=\frac{U}{\| \hs \|_F^2}$.

The objective function in \eqref{joi} is a quadratic optimization problem with convex constrained sets. Such problems can be iteratively solved by using Newton's method \cite{Newton}. Besides, we also obtain the closed-form solution, as demonstrated in Theorem \ref{T2} and Corollary \ref{C1}.
\begin{theorem}\label{T2}
The minimal value of $-J$ in \eqref{joi} is achieved when $\pr$ is on the surface of the constrained sets.
\end{theorem}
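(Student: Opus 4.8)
The plan is to combine the compactness of the feasible region with the indefiniteness of the objective's curvature. First I would observe that the feasible set $\mc{F}=\{\pr:\psi(\pr)\le U,\ \psi'(\pr)\le\rho\}$ is the intersection of two closed Frobenius-norm balls, hence closed and bounded, and therefore compact. Since the objective $-J(\pr)=\sum_{u=1}^U\pu_u^H\rx_u\pu_u$ is continuous, the Weierstrass extreme value theorem guarantees that it attains a global minimum at some $\pr^\circ\in\mc{F}$; the task is then to show that $\pr^\circ$ lies on the boundary of $\mc{F}$, i.e.\ that at least one of the two constraints is active.

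The central observation is that $-J$ is a \emph{pure} quadratic form in the stacked columns $(\pu_1,\dots,\pu_U)$, so its Hessian is the constant block-diagonal matrix $2\,\diag(\rx_1,\dots,\rx_U)$, independent of $\pr$. By the eigen-decomposition $\rx_u={\bf V}_u{\bf E}_u{\bf V}_u^H$ noted after \eqref{Idicomm}, every $\rx_u=\mu\tilde{\bf H}_u\tilde{\bf H}_u^H-\hc(\hc)^H$ carries at least one strictly negative eigenvalue, so the Hessian is indefinite at every point. This is exactly the ``saddle-surface'' geometry mentioned above, whose saddle point is the origin. I would then argue by contradiction: if $\pr^\circ$ were interior to $\mc{F}$, it would be an unconstrained local minimizer, and the second-order necessary condition would force the Hessian to be positive semidefinite there, contradicting its indefiniteness.

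Concretely, at an interior minimizer the gradient $\partial(-J)/\partial\pu_u=2\rx_u\pu_u$ must vanish for every $u$, pinning $\pr^\circ$ to a critical point of the saddle surface (the origin when the $\rx_u$ are nonsingular). Letting ${\bf v}_u^-$ be a unit eigenvector of $\rx_u$ with negative eigenvalue $\lambda_u^-$, the perturbation $\pu_u\mapsto\pu_u\pm\epsilon{\bf v}_u^-$ with small $\epsilon$ keeps the point inside $\mc{F}$ (both $\psi$ and $\psi'$ vary continuously and are strict at an interior point) while changing the objective by $\epsilon^2\lambda_u^-<0$, strictly decreasing $-J$ and contradicting minimality. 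Hence $\pr^\circ$ cannot be interior and must sit on the surface of at least one constraint. The step I expect to require the most care is technical rather than conceptual: confirming $\mc{F}\neq\emptyset$ for the chosen $\rho$ so that a minimizer genuinely exists, and that the negative-curvature perturbation preserves feasibility at an interior point. Both follow immediately from the compactness and strict-interiority assumptions, leaving the indefinite-Hessian argument to carry the real weight of the proof.
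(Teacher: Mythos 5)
Your proposal is correct and takes essentially the same approach as the paper: both hinge on each $\rx_u$ having a strictly negative eigenvalue, so that at any interior feasible point a small step of $\pu_u$ along the corresponding eigenvector (with suitably chosen sign) remains feasible and strictly decreases $-J$, forcing the minimizer onto the surface of the constraint sets. The only differences are cosmetic: you package the argument as a contradiction with first/second-order optimality conditions and make the Weierstrass existence step explicit, while the paper runs a direct descent argument at an arbitrary interior point, using the sign of the $\pm\epsilon$ perturbation to keep the linear term non-positive.
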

\begin{proof}
The proof is shown in Appendix \ref{AppCC}.
\end{proof}

According to Theorem \ref{T2}, we obtain an important corollary that can directly obtain the closed-form solution of $\pr$, which is illustrated as follows.
\begin{corollary}\label{C1}
When there exist non-zero scaling factors, $f_u$, such that $\psi(f_u\bar{\bf v}_{R,u}) = U \bigcap \psi'(f_u\bar{\bf v}_{u,R})\leq \rho  $ or  $\psi(f_u\bar{\bf v}_{u,R}) \leq U \bigcap \psi'(f_u\bar{\bf v}_{u,R})= \rho $, the closed-formed solution of $\pr$ is obtained as ${\pu}_u^\star =f_u\bar{\bf v}_{u,R}$.
\end{corollary}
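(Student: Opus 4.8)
The plan is to combine the boundary property established in Theorem~\ref{T2} with the eigenstructure of each $\rx_u$ and the Karush--Kuhn--Tucker (KKT) conditions of problem~\eqref{joi}. Recall that each $\rx_u$ admits the eigen-decomposition $\rx_u={\bf V}_u{\bf E}_u{\bf V}_u^H$ with exactly one negative diagonal entry in ${\bf E}_u$; denote by $\bar{\bf v}_{u,R}$ the corresponding unit eigenvector and by $e_u<0$ its eigenvalue (the phase of $\bar{\bf v}_{u,R}$ is fixed below). First I would observe that, for any prescribed column norm $\|\pu_u\|_F=f_u$, the Rayleigh quotient inequality gives $\pu_u^H\rx_u\pu_u\ge e_u f_u^2$, with equality if and only if $\pu_u$ is a scalar multiple of $\bar{\bf v}_{u,R}$. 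Hence the objective $-J=\sum_u\pu_u^H\rx_u\pu_u$ is minimized, column by column, along the negative eigendirection, which already forces any optimizer to have the claimed form $\pu_u^\star=f_u\bar{\bf v}_{u,R}$ up to the magnitude $f_u$ and a phase.

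Second, since $\pu_u^H\rx_u\pu_u$ is invariant to the phase of $\pu_u$, that phase is free and I would fix it so that $\bar{\bf v}_{u,R}$ aligns with the $u$th target column $\hs\lambda_u$ of $\pr_{\rm MI}^\star=\hs{\bf\Lambda}$, i.e.\ so that $\rep(\bar{\bf v}_{u,R}^H\hs\lambda_u)=|\bar{\bf v}_{u,R}^H\hs\lambda_u|\triangleq g_u$. With the direction and phase fixed, both constraints collapse to scalar functions of the magnitudes, $\psi=\sum_u f_u^2$ and $\psi'=\sum_u(f_u^2-2f_u g_u+\|\hs\lambda_u\|_F^2)$, and the objective becomes $-J=\sum_u e_u f_u^2$, so the residual problem is a scalar quadratic program in $\{f_u\}$.

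Third, Theorem~\ref{T2} confines the minimizer to the boundary of the feasible set, so at the optimum at least one of $\psi=U$ or $\psi'=\rho$ holds; this is exactly the dichotomy in the corollary, namely the power sphere active ($\psi(f_u\bar{\bf v}_{u,R})=U$ with $\psi'\le\rho$) or the distance sphere active ($\psi'(f_u\bar{\bf v}_{u,R})=\rho$ with $\psi\le U$). In each case I would write the stationarity condition $(\rx_u+(\eta_1+\eta_2){\bf I})\pu_u=\eta_2\hs\lambda_u$ with multipliers $\eta_1,\eta_2\ge0$ and complementary slackness, and verify that $f_u\bar{\bf v}_{u,R}$ satisfies it: in the power-active case $\eta_2=0$ and the condition reduces to $\pu_u\in\ker(\rx_u+\eta_1{\bf I})$, recovering the negative eigenvector; in the distance-active case $\eta_1=0$ and the alignment of $\bar{\bf v}_{u,R}$ with $\hs\lambda_u$ makes the right-hand side collinear with $\bar{\bf v}_{u,R}$, so stationarity again pins $\pu_u$ to that direction. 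Feasibility of the prescribed $f_u$ is precisely the hypothesis of the corollary, and optimality then follows because the per-column Rayleigh bound is attained with equality.

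The main obstacle is that the two spheres couple the columns through a single pair of multipliers $(\eta_1,\eta_2)$, so the per-column eigenvector form must be shown consistent with one common Lagrange condition rather than $U$ independent ones. I would address this by treating $\gamma=\eta_1+\eta_2$ as the single scalar that sets each magnitude $f_u$ through the active boundary equation, and by checking that the resulting $\{f_u\}$ are exactly those assumed to exist in the corollary --- which is why the statement is deliberately conditional. A secondary point to settle is that $-J$ is an indefinite (saddle-type) quadratic, so stationarity alone need not certify a minimum; however, the feasible set of \eqref{joi} is the intersection of two convex balls, and Theorem~\ref{T2} restricts the optimum to its boundary, which together with the tight Rayleigh bound $-J\ge\sum_u e_u f_u^2$ excludes any lower feasible value and establishes that $\pu_u^\star=f_u\bar{\bf v}_{u,R}$ is globally optimal.
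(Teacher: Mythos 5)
Your route (Rayleigh quotient plus KKT verification) is genuinely different from the paper's, but it has gaps at exactly the load-bearing steps, and as written it cannot be completed. In the distance-active case you assert that fixing the phase of $\bar{\bf v}_{u,R}$ so that $\bar{\bf v}_{u,R}^H\hs\lambda_u$ is real and positive makes the right-hand side $\eta_2\hs\lambda_u$ collinear with $\bar{\bf v}_{u,R}$. It does not: a global phase rotation cannot remove the component of $\hs\lambda_u$ orthogonal to $\bar{\bf v}_{u,R}$, and generically $\hs\lambda_u$ (built from the sensing channel) is not parallel to any eigenvector of the communication matrix $\rx_u$. Hence the stationarity equation $(\rx_u+\eta_2{\bf I})\pu_u=\eta_2\hs\lambda_u$ has no solution of the form $\pu_u=f_u\bar{\bf v}_{u,R}$ in general, and the verification fails. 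The power-active case fails for a parallel reason: with $\eta_2=0$, stationarity reads $\rx_u\pu_u=-\eta_1\pu_u$ for every $u$, which forces the \emph{single} multiplier $\eta_1$ to equal $-e_u$ simultaneously for all users with nonzero columns; when the negative eigenvalues $e_u$ differ across $u$ (the generic case) this is impossible. Your proposed fix of treating $\gamma=\eta_1+\eta_2$ as one free scalar does not help, because the inconsistency lies in the eigenvalue (directional) condition, not in the magnitudes $f_u$.

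There are two further soft spots. First, your opening claim that the Rayleigh bound ``already forces any optimizer to have the claimed form'' presumes that rotating each column onto $\bar{\bf v}_{u,R}$ at fixed norm preserves feasibility; it does not, since $\psi'$ depends on the direction of $\pu_u$, so the rotated point can violate $\psi'\leq\rho$. Second, your closing global-optimality argument compares only feasible points sharing the candidate's per-column magnitudes, whereas the constraints bound only the sums $\sum_u\|\pu_u\|_F^2$ and $\sum_u\|\pu_u-\hs\lambda_u\|_F^2$, so power can be reallocated toward the column with the most negative $e_u$ and drive $\sum_u e_u f_u^2$ lower; the bound you invoke excludes nothing about such points. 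For contrast, the paper's proof avoids the Lagrangian entirely: starting from the hypothesized boundary point $f_u\bar{\bf v}_{u,R}$, it perturbs each column along the remaining eigenvectors $\bar{\bf v}_{u,i}$, $i<R$, notes that the objective changes by $\sum_i\epsilon_i^2 r_i>0$ because those eigenvalues are nonnegative, and concludes via Theorem~\ref{T2} that no feasible descent is possible. That argument is local and per-column (it, too, ignores cross-column reallocation), but it is self-consistent under the corollary's hypothesis, whereas your KKT route, pushed to completion, would actually show the candidate is generally not a stationary point of the Lagrangian --- so the strategy itself, not merely a detail, has to change.
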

\begin{proof}
The proof is shown in Appendix \ref{App2}.
\end{proof}

However, the closed-form solution exists only when $f_u$ exists. In more general cases, we can obtain the following corollary.
\begin{corollary}\label{C2}
The optimal point of $\pr$ is on the tangent plane of $-J(\pr)=J_0$ with $J_0$ being the minimal value.
\end{corollary}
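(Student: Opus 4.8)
The plan is to establish Corollary~\ref{C2} as the first-order optimality (tangency) characterization that complements the closed form of Corollary~\ref{C1}. By Theorem~\ref{T2} the minimizer $\pr^\star$ already lies on the boundary of the feasible region, namely the intersection of the two Frobenius-norm spheres $\psi(\pr)\le U$ and $\psi'(\pr)\le\rho$. The level surface $-J(\pr)=J_0$ has a normal direction whose $u$th block is along $\rx_u\pu_u$ (consistent with $\partial J/\partial\pu_u=2\rx_u\pu_u$), so I read the assertion ``$\pr^\star$ lies on the tangent plane of $-J=J_0$'' as the statement that at $\pr^\star$ this objective normal is aligned with the outward normal(s) of the active constraint surface(s), i.e. the tangent plane of $\{-J=J_0\}$ supports the feasible set precisely at $\pr^\star$.

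First I would rule out a feasible descent direction by contradiction: if the surface $\{-J=J_0\}$ crossed the feasible boundary transversally at $\pr^\star$, there would be an admissible perturbation $\pr^\star+\delta$ remaining inside both spheres with $-J(\pr^\star+\delta)<J_0$, contradicting the minimality of $J_0$; hence $\{-J=J_0\}$ must be tangent to the feasible region there. I would then make this quantitative through the Lagrangian $\mc L=\sum_u\pu_u^H\rx_u\pu_u+\eta_1(\psi-U)+\eta_2(\psi'-\rho)$ with multipliers $\eta_1,\eta_2\ge0$. Stationarity in $\pu_u$ yields
\[
\big(\rx_u+(\eta_1+\eta_2){\bf I}\big)\pu_u^\star=\eta_2\hs\lambda_u,
\]
while complementary slackness fixes the active set. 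This identity says precisely that the objective normal $\rx_u\pu_u^\star$ lies in the normal space spanned by the constraint gradients $\pu_u^\star$ and $\pu_u^\star-\hs\lambda_u$; equivalently, the tangent plane of $\{-J=J_0\}$ at $\pr^\star$ contains the tangent space of the feasible boundary, which is exactly the tangency asserted by Corollary~\ref{C2}.

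Finally I would connect the two corollaries and flag the difficulty. When only one sphere is active and the stationarity direction collapses onto the negative-eigenvalue eigenvector $\bar{\bf v}_{u,R}$ of $\rx_u$, the solution degenerates to the scaled eigenvector $f_u\bar{\bf v}_{u,R}$ of Corollary~\ref{C1}; Corollary~\ref{C2} covers the complementary, more common case in which both spheres are active and $\pu_u^\star=\eta_2(\rx_u+(\eta_1+\eta_2){\bf I})^{-1}\hs\lambda_u$ is tilted away from $\bar{\bf v}_{u,R}$. The main obstacle I anticipate is the non-convexity: since $\rx_u$ is indefinite, stationarity is necessary but not sufficient, so I must additionally certify that the identified tangency point is the global minimizer rather than a saddle. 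I would do this using the saddle-surface structure of $J$ noted after \eqref{joi} together with Theorem~\ref{T2} to discard interior and wrong-branch stationary points, and I would verify that $\rx_u\pu_u^\star\neq{\bf 0}$ on $\{-J=J_0\}$ (it vanishes only at the origin, excluded because $J_0<0$) so that the tangent plane is well defined.
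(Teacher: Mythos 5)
Your proposal is correct, and its core is the same argument the paper uses: invoke Theorem \ref{T2} to place the minimizer on the constraint surface, then rule out a transversal crossing of the level set $\{-J(\pr)=J_0\}$ with the active constraint surface by contradiction, which forces tangency. The paper's proof stops there, and phrases the contradiction as producing a point with $-J(\pr)=J_0$ strictly inside the constraint (i.e., $\psi'(\pr)<\rho$), which contradicts Theorem \ref{T2}; your version contradicts the minimality of $J_0$ directly, which is slightly cleaner and more self-contained. What you add beyond the paper is the Lagrangian/KKT formalization: the stationarity identity $(\rx_u+(\eta_1+\eta_2){\bf I})\pu_u^\star=\eta_2\hs\lambda_u$ turns the geometric tangency into an algebraic condition, recovers the Corollary \ref{C1} eigenvector solution as the $\eta_2=0$ special case, and correctly flags that, because $\rx_u$ is indefinite, stationarity alone is not sufficient and must be combined with Theorem \ref{T2} to certify the global minimizer. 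One small imprecision: the gradient of $-J$ does not vanish ``only at the origin'' but on the product of the nullspaces of the $\rx_u$'s; your conclusion survives, however, since any such point has $-J=0\neq J_0<0$, so the gradient is indeed nonzero on the level set and the tangent plane is well defined. None of this extra machinery is needed for the statement as the paper proves it, but it buys an explicit characterization of the optimum that the paper's purely geometric argument does not provide.
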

\begin{proof}
The proof is shown in Appendix \ref{App3}.
\end{proof}

\subsection{Iterative Solution}
Next, we propose an iterative algorithm based on Newton's method \cite{Newton}, when the closed-form solution cannot be obtained.
According to the proof of Theorem \ref{T2}, we can make $\pu_u $ reach one surface of the constraints.  When $\pu_u$ is on the surface of constraints, we aim to make the objective function be further decreased. Since there are two surfaces and only one will be reached, we consider two cases as follows.

\subsubsection{Case 1: $\psi(\pr)< U$ and $\psi'(\pr)=\rho$}
This case refers to the case when the surface of $\psi'(\pr)=\rho$ is first reached.
The current iterative real-valued $\pr$ is denoted as $\bar\pr^{(i)}$ and each column of $\bar\pr^{(i)}$ is $\bar\pu_u^{(i)}$. We obtain two new columns moving in the directions of $\bar{\bf v}_{u,R}$, with $\bar{\bf v}_{u,R}$ being defined in \eqref{threeterm},
\begin{align}\label{n1}
& \bar\pu_{u-}^{(i)} =  \bar\pu_u^{(i)}-\epsilon \bar{\bf v}_{u,R},\notag\\
& \bar\pu_{u+}^{(i)}=  \bar\pu_u^{(i)}+\epsilon \bar{\bf v}_{u,R}.
\end{align}
Note that neither $\bar\pu_{u+}^{(i)}$ nor $\bar\pu_{u-}^{(i)}$  satisfies that $\psi'(\pr)=\rho$. Hence, we need to scale these columns onto the surface of $\psi'(\pr)=\rho$, i.e.,
\begin{align}\label{newq1}
 \bar\pu_{u-}^{(i+1)}=\bar\pu_{u-}^{(i)} + a_{u-} (\bar {\bf c}_{u}-\bar\pu_{u-}^{(i)}),\notag\\
 \bar\pu_{u+}^{(i+1)}=\bar\pu_{u+}^{(i)} + a_{u+} (\bar {\bf c}_{u}- \bar\pu_{u+}^{(i)}),
 \end{align}
where $\bar{\bf c}_u=[\rep[\hs\lambda_u]^T,\imp[\hs\lambda_u]]^T$, $a_{u-}$ and $a_{u+}$ are scaling coefficients, such that $\psi'\left(\bar\pu_{u-}^{(i+1)}\right)=\rho$ and $\psi'\left(\bar\pu_{u+}^{(i+1)}\right)=\rho$, respectively.

The columns defined in \eqref{newq1} should make the objective function keep dropping. We select the one of which objective function is reduced, i.e.,
\begin{align}\label{comp}
\bar\pu_u^{(i+1)}=\left\{\begin{array}{cc}
\bar\pu_{u-}^{(i+1)}& {\rm if} \, -J(\bar\pu_{u-}^{(i+1)})<-J(\bar\pu_{u+}^{(i+1)})\\
\bar\pu_{u+}^{(i+1)}& {\rm else}\\
\end{array}
\right.
\end{align}

\begin{algorithm}[t]
	\caption{JCAS Precoder Optimization with MI Constraint}\label{Alg1}
	\begin{algorithmic}[1]
    \STATE {\bf Input: } $\hs$ and $\{\rx_u\}_{u=1}^U$.

    \STATE {\bf Initialization: }  $i=0$, $J(\pr^{(-1)})=-\infty$, $\epsilon$, $\rho$,

    \STATE Obtaining ${\bf v}_{u,R}$ that is the $R$th eigenvector of $\rx_u$, with $R$th eigenvalue being negative.

     \STATE Finding $\pr^{(i)}$ satisfying Case 1 or Case 2.
    \STATE {\textbf{Case 1:}}
    \WHILE {$-J(\pr^{(i-1)})>-J(\pr^{(i)})$ and $\psi(\pr^{(i)})< U$ and $\psi'(\pr^{(i)})=\rho$}
    \STATE Obtaining $\pu_{u\pm}^{(i)}$ according to \eqref{n1}.
    \STATE Scaling $\pu_{u\pm}^{(i)}$ and obtaining $\pu_{u\pm}^{(i+1)}$ according to \eqref{newq1}.
    \STATE Obtaining next iterative vectors, $\pu_{u}^{(i+1)}$,  according to \eqref{comp}.
    \STATE $i=i+1$.
    \ENDWHILE

    \STATE {\textbf{Case 2:}}
    \WHILE {$-J(\pr^{(i-1)})>-J(\pr^{(i)})$ and $\psi(\pr^{(i)})=U$ and $\psi'(\pr^{(i)})<\rho$}
    \STATE Obtaining $\pu_{u\pm}^{(i)}$ according to \eqref{n1}.
    \STATE Scaling $\pu_{u\pm}^{(i)}$ and obtaining $\pu_{u\pm}^{(i+1)}$  according to \eqref{newq2}.
    \STATE Obtaining next iterative vectors, $\pu_{u}^{(i+1)}$,  according to \eqref{comp}.
    \STATE $i=i+1$.
    \ENDWHILE

    \STATE {\bf Output:}  $\pr$.
	\end{algorithmic}
\end{algorithm}

Updating the iteration index $i=i+1$ and repeat the same procedure. We terminate the iteration when $-J(\bar\pu_u^{(i+1)})$ stop dropping or when the other constraint is not satisfied, i.e., $\psi(\pr)> U.$

\subsubsection{Case 2: $\psi(\pr)=U$ and $\psi'(\pr)<\rho$}

In Case 2, we adopt the same strategy as Case 1. In this case, two temporary vectors are still given by \eqref{n1}.
Then, we scale them onto the surface of $\psi(\pr)=U$, i.e.,
\begin{align}\label{newq2}
 \bar\pu_{u-}^{(i+1)}=  b_{u-}  \bar\pu_{u-}^{(i)},\notag\\
\bar\pu_{u+}^{(i+1)}=  b_{u+}  \bar\pu_{u+}^{(i)},
\end{align}
where $b_{u,\pm}$ are scaling coefficients, such that $\psi\left(\pu_{u\pm}^{(i+1)}\right)=U$.

Likewise, we select the one of which objective function is less as in \eqref{comp}.
Then we update the iteration index $i=i+i$ and repeat the same procedure. We terminate the iteration $-J(\bar\pu_u^{(i+1)})$ stop dropping or when the other constraint is not satisfied, i.e., $\psi'(\pr)>\rho.$

\subsection{Complexity Analysis}
In this subsection, we analyze the computational complexity of Algorithm 1. The main computation tasks in Algorithm 1 include the eigen-decomposition of ${\bf R}_u$ and the iterations. The complexity of the eigen-decomposition of ${\bf R}_u$ can be given by $\mathcal O(N^3)$. Since there are $UK$ ${\bf R}_u[k]$, the total complexity of eigen-decomposition  is $\mathcal O(N^3UK)$. The main steps in the iterations are Step 8 for Case 1 and Step 15 for Case 2, which both have a complexity of $\mathcal O(N)$, due to the calculation of $a_{u{\pm}}$ and ${b_{u\pm}}$. Hence, the complexity of the iterations is $\mathcal O(NUKN_{it})$, where $N_{\rm it}$ is the number of iterations.  The overall complexity can be given by  $\max\left(\mathcal O(N^3UK),\mathcal O(NUKN_{it})\right)$.

\section{CRB Constrained JCAS Waveform Optimization}

In this section, we optimize the proposed communication metric,
$J$, with constraining the CRB of the radar. Intuitively, we would like to adopt the same strategy as in MI constrained problem, which controls the Euclidean distance between the precoder and the optimal individual radar precoder. Nevertheless, for individual CRB optimization, we obtain the optimal covariance  matrix, denoted as ${\bf Q}^\star_{\rm CRB}[k]$, instead of the optimal $\pr[k]$. With a given ${\bf Q}[k]=\pr[k]^H\pr[k]$, there are infinite solutions of $\pr[k]$. Not all the solutions are appropriate ones for the communication metric. Using the Euclidean distance between ${\bf Q}[k]$ and ${\bf Q}^\star_{\rm CRB}[k]$ would make the variables have an order of four and hence difficult to solve.  Aiming to reduce the order of variables, we constrain the CRB by controlling
\begin{align}\label{CRBcons}
&\psi''(\pr[k])\notag\\
=&\left|\rep[\pr^H[k]\pr[k]-{\bf Q}_{\rm CRB}^\star[k]]+\imp[\pr^H[k]\pr[k]-{\bf Q}^\star_{\rm CRB}[k]]\right|\notag\\
\leq& \xi,
\end{align}
where $\xi$ is the threshold to control the CRB. This constraint has an order of two for the variables in $\pr[k]$.

Still, we omit the subcarrier $k$ in the following of this section. The JCAS optimization problem with the CRB constraint is formulated as
 \begin{align}\label{CRBCONs}
{\rm \arg}\,& \mathop{\min}\limits_{\pr}  -J <0 \notag\\
{\rm s.t.} & \psi(\pr)\leq U,\notag\\
& \psi''(\pr )\leq \xi.
\end{align}
Similar to the MI constrained problem, the transformed CRB constrained problem can also be seen as  a quadratic problem with convex constraints. Such problems can be solved using Newton's method.

Referring to Corollary \ref{C1}, the closed-form solution can be obtained when $f_u\bar{\bf v}_{u,R}$ can reach the surfaces of $\psi(\pr)=U$ or $\psi''(\pr)=\xi$. When the closed-form solution does not exist, we propose an iterative algorithm as in Algorithm 1.

\subsubsection{Case 3: $\psi(\pr)< U$ and $\psi''(\pr)= \xi$}
In this case,  two temporary vectors are still given by
$ \bar\pu_{u-}^{(i)}$ and $ \bar\pu_{u+}^{(i)}$ of which expressions are given by \eqref{n1}.
Then, we scale them onto the surface of $\psi''(\pr)= \xi$. The new columns are given by
\begin{align}\label{newq3}
& \bar\pu_{u-}^{(i+1)} =  c_{u-} \bar\pu_{u-}^{(i)},\notag\\
& \bar\pu_{u+}^{(i+1)}=  c_{u+} \bar\pu_{u+}^{(i)},
\end{align}
where $c_{u\pm}$ are scaling coefficients, such that $\psi''(\pu_{u\pm}^{(i+1)})=\xi$.

Likewise, we select the one of which objective function is less as in \eqref{comp}.
Then we update the iteration index $i=i+i$ and repeat the same procedure. We terminate the iteration $-J(\bar\pu_u^{(i+1)})$ stop dropping or when the other constraint is not satisfied, i.e., $\psi(\pr)>U.$

\begin{algorithm}[t]
	\caption{JCAS Precoder Optimization with CRB Constraint}\label{Alg2}
	\begin{algorithmic}[1]
    \STATE {\bf Input: } ${\bf Q}^\star[k]$ and $\{\rx_u\}_{u=1}^U$.

    \STATE {\bf Initialization: }  $i=0$, $J(\pr^{(-1)})=-\infty$, $\epsilon$, $\xi$,

    \STATE Obtaining ${\bf v}_{u,R}$ that is the $R$th eigenvector of $\rx_u$, with $R$th eigenvalue being negative.

     \STATE Finding $\pr^{(i)}$ satisfying Case 3 or Case 4.
    \STATE {\textbf{Case 3:}}
    \WHILE {$-J(\pr^{(i-1)})>-J(\pr^{(i)})$ and $\psi(\pr^{(i)})< U$ and $\psi''(\pr^{(i)})=\xi$}
    \STATE Obtaining $\pu_{u\pm}^{(i)}$.
    \STATE Scaling $\pu_{u\pm}^{(i)}$ and obtaining $\pu_{u\pm}^{(i+1)}$ according to \eqref{newq3}.
    \STATE Obtaining next iterative vectors, $\pu_{u}^{(i+1)}$,  according to \eqref{comp}.
    \STATE $i=i+1$.
    \ENDWHILE

    \STATE {\textbf{Case 4:}}
    \WHILE {$-J(\pr^{(i-1)})>-J(\pr^{(i)})$ and $\psi(\pr^{(i)})=U$ and $\psi''(\pr^{(i)})<\xi$}
    \STATE Obtaining $\pu_{u\pm}^{(i)}$.
    \STATE Scaling $\pu_{u\pm}^{(i)}$ and obtaining $\pu_{u\pm}^{(i+1)}$  according to \eqref{newq4}.
    \STATE Obtaining next iterative vectors, $\pu_{u}^{(i+1)}$,   according to \eqref{comp}.
    \STATE $i=i+1$.
    \ENDWHILE

    \STATE {\bf Output:}  $\pr$.
	\end{algorithmic}
\end{algorithm}

\subsubsection{Case 4: $\psi(\pr)=U$ and $\psi''(\pr)<\xi$}

In Case 4, two temporary vectors are given by $\bar\pu_{u-}^{(i)}$  and $ \bar\pu_{u+}^{(i)}$.
Then, we scale them onto the surface of $\psi(\pr)=U$, i.e.,
\begin{align}\label{newq4}
 \bar\pu_{u-}^{(i+1)}=  d_{u-}  \bar\pu_{u-}^{(i)},\notag\\
\bar\pu_{u+}^{(i+1)}=  d_{u+}  \bar\pu_{u+}^{(i)},
\end{align}
where $d_{u,\pm}$ are scaling coefficients, such that $\psi\left(\pu_{u\pm}^{(i+1)}\right)=U$.

Likewise, we select the one of which objective function is less as in \eqref{comp}.
Then, we update the iteration index $i=i+i$ and repeat the same procedure. We terminate the iteration when $-J(\bar\pu_u^{(i)})$ stop dropping or when the other constraint is not satisfied, i.e., $\psi''(\pr)>\xi.$
The proposed CRB constrained optimization scheme is summarized in Algorithm \ref{Alg2}.
Note that from Case 1 to Case 4, all procedures are the same except the determination of the scaling coefficients and the terminating conditions.

\section{Simulation Results}
\begin{figure}[t]
    \centering
    \includegraphics[scale=0.8]{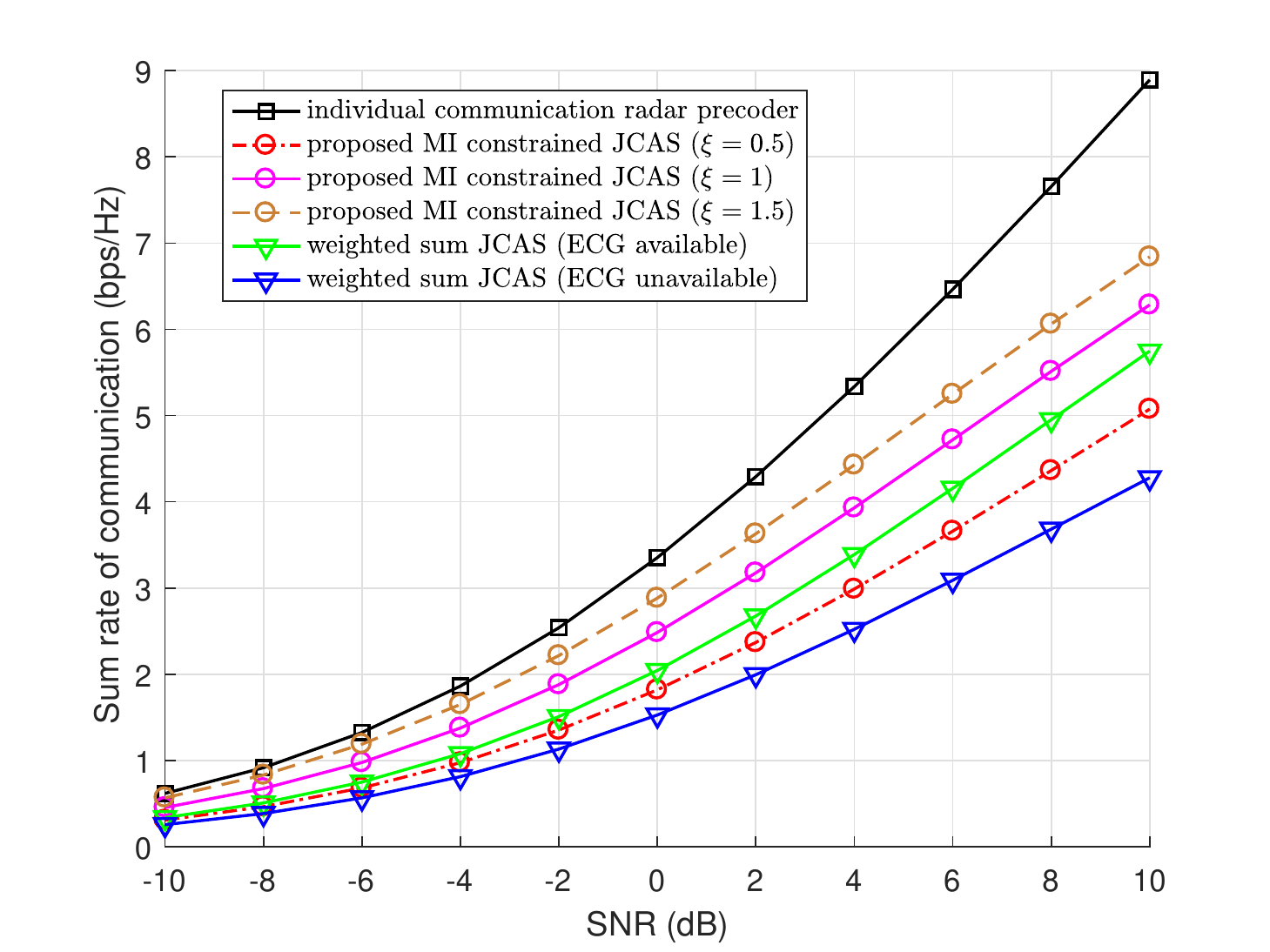}
    \caption{  Sum rates of communications versus SNR with using Algorithm 1 and other bench-marking JCAS solutions.}
    \label{Fig_1}
\end{figure}

\begin{figure}[t]
    \centering
    \includegraphics[scale=0.8]{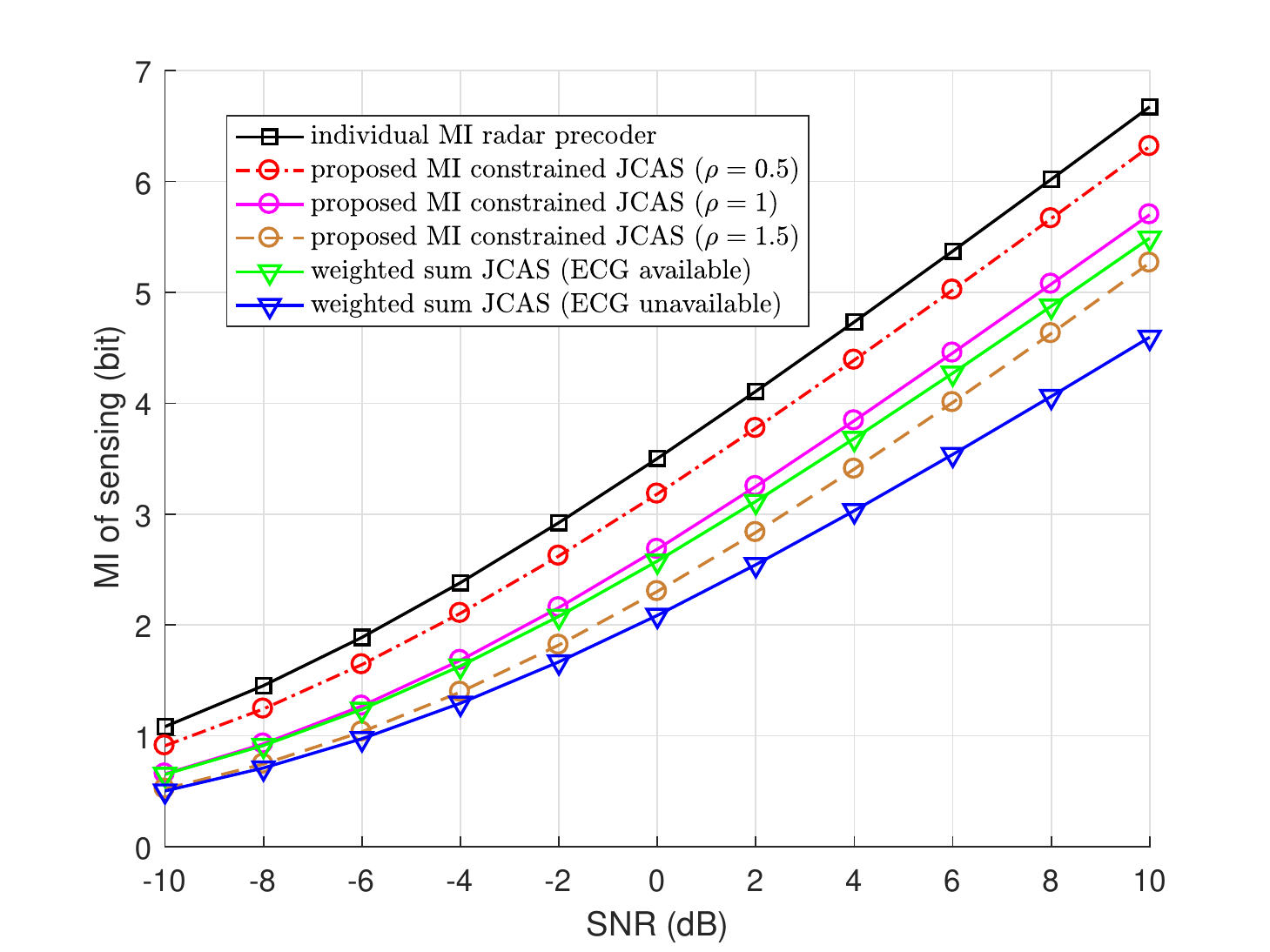}
    \caption{MI of sensing versus SNR with using Algorithm 1 and other bench-marking JCAS solutions.}
    \label{Fig_2}
\end{figure}
\begin{figure}[t]
    \centering
    \includegraphics[scale=0.8]{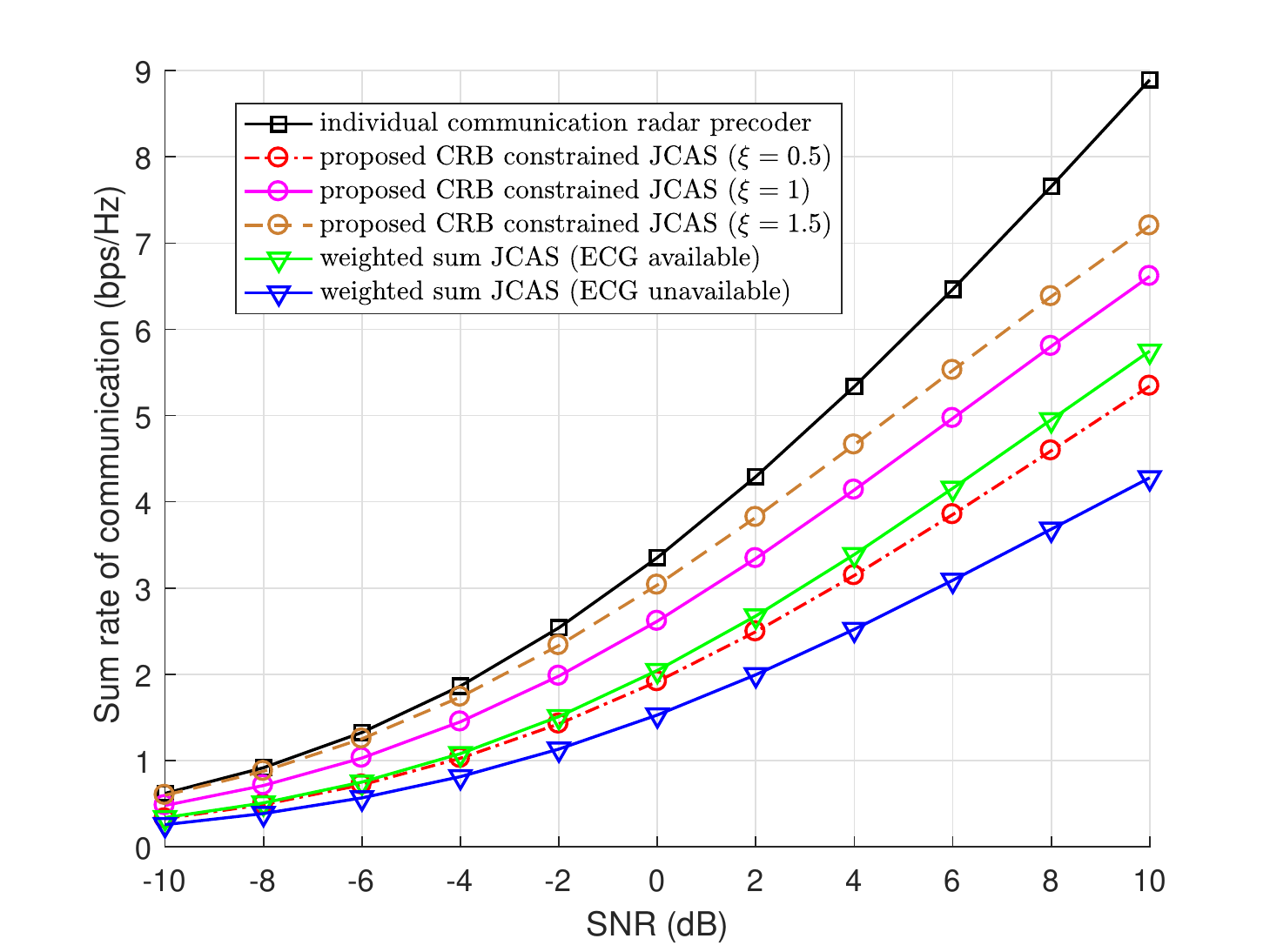}
    \caption{ Sum rates of communications versus SNR with using Algorithm 2 and other bench-marking solutions. }
    \label{Fig_3}
\end{figure}

\begin{figure}[t]
    \centering
    \includegraphics[scale=0.8]{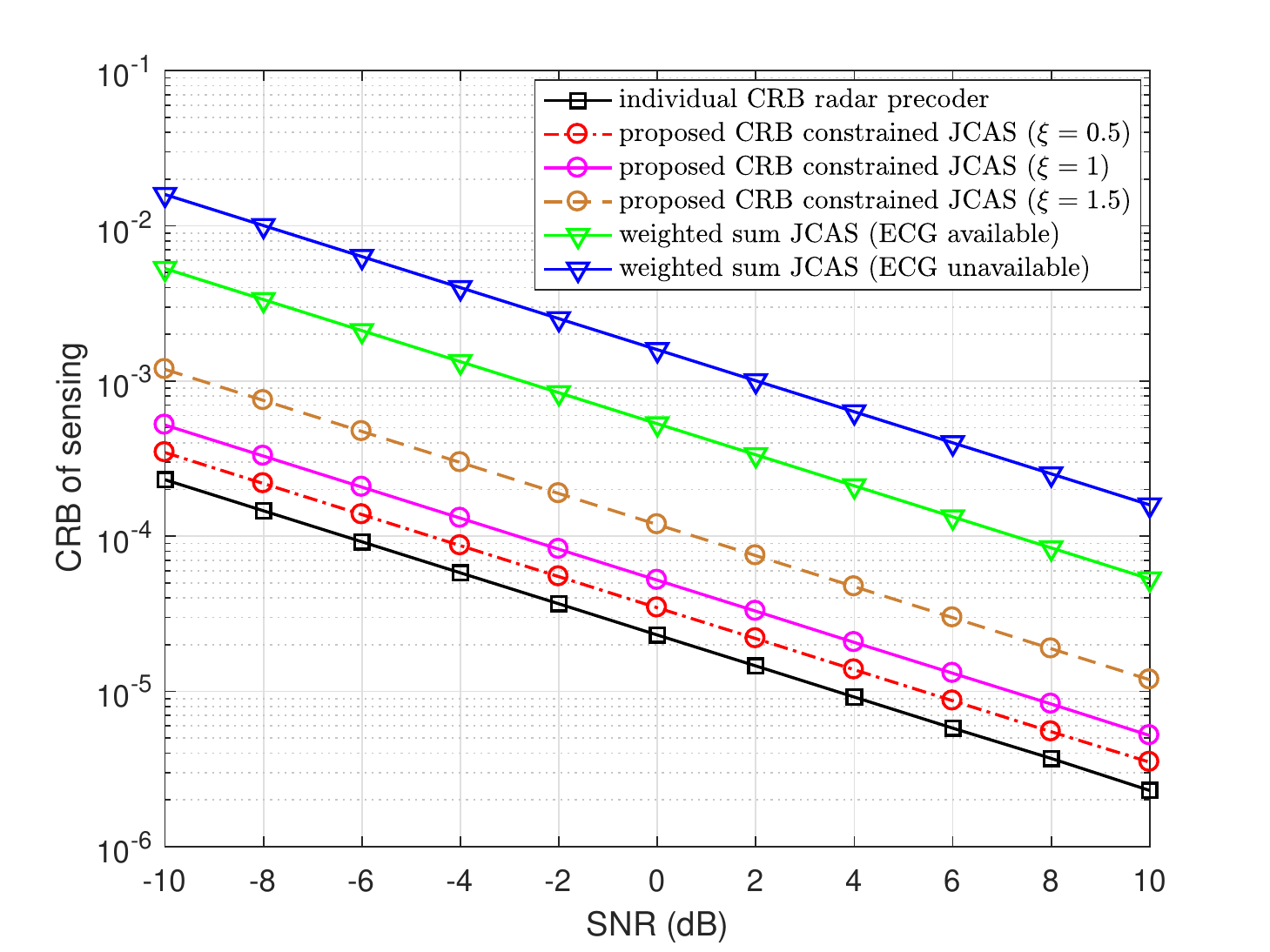}
    \caption{CRB of sensing versus SNR with using Algorithm 2 and other bench-marking solutions.}
    \label{Fig_4}
\end{figure}
 \begin{figure}[t]
    \centering
    \includegraphics[scale=0.8]{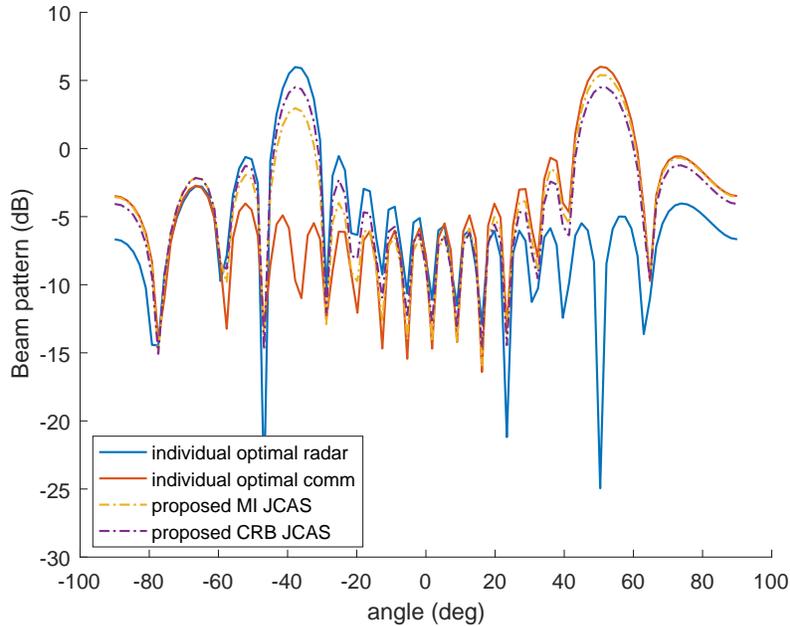}
    \caption{Beam patterns with employing individual and JCAS solutions.}
    \label{Fig_5}
\end{figure}

\begin{figure}[t]
    \centering
    \includegraphics[scale=0.8]{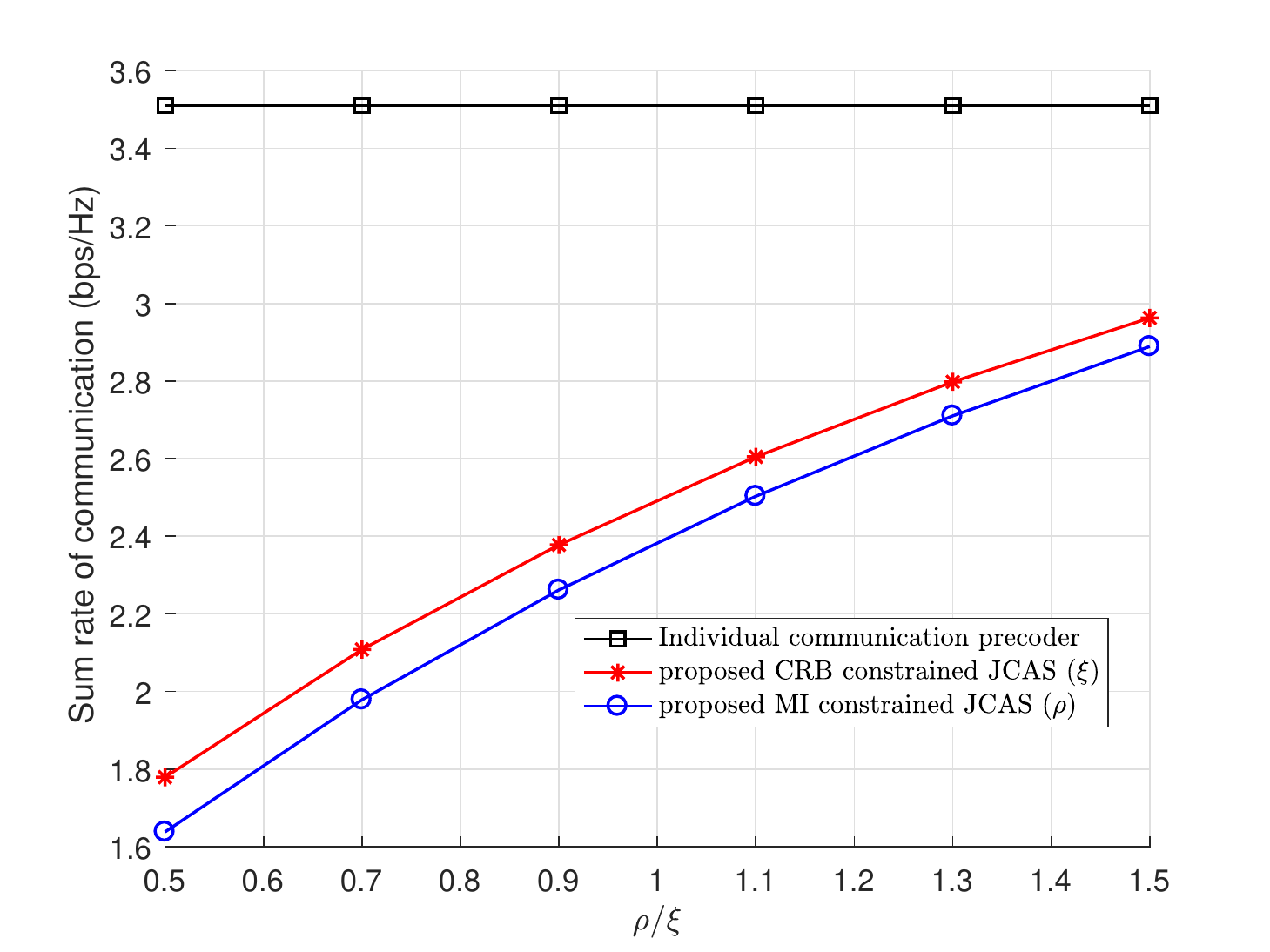}
    \caption{ Sum rate versus $\rho/\xi$  for communications with employing individual and JCAS solutions.}
    \label{Fig_6}
\end{figure}

\begin{figure}[t]
    \centering
    \includegraphics[scale=0.8]{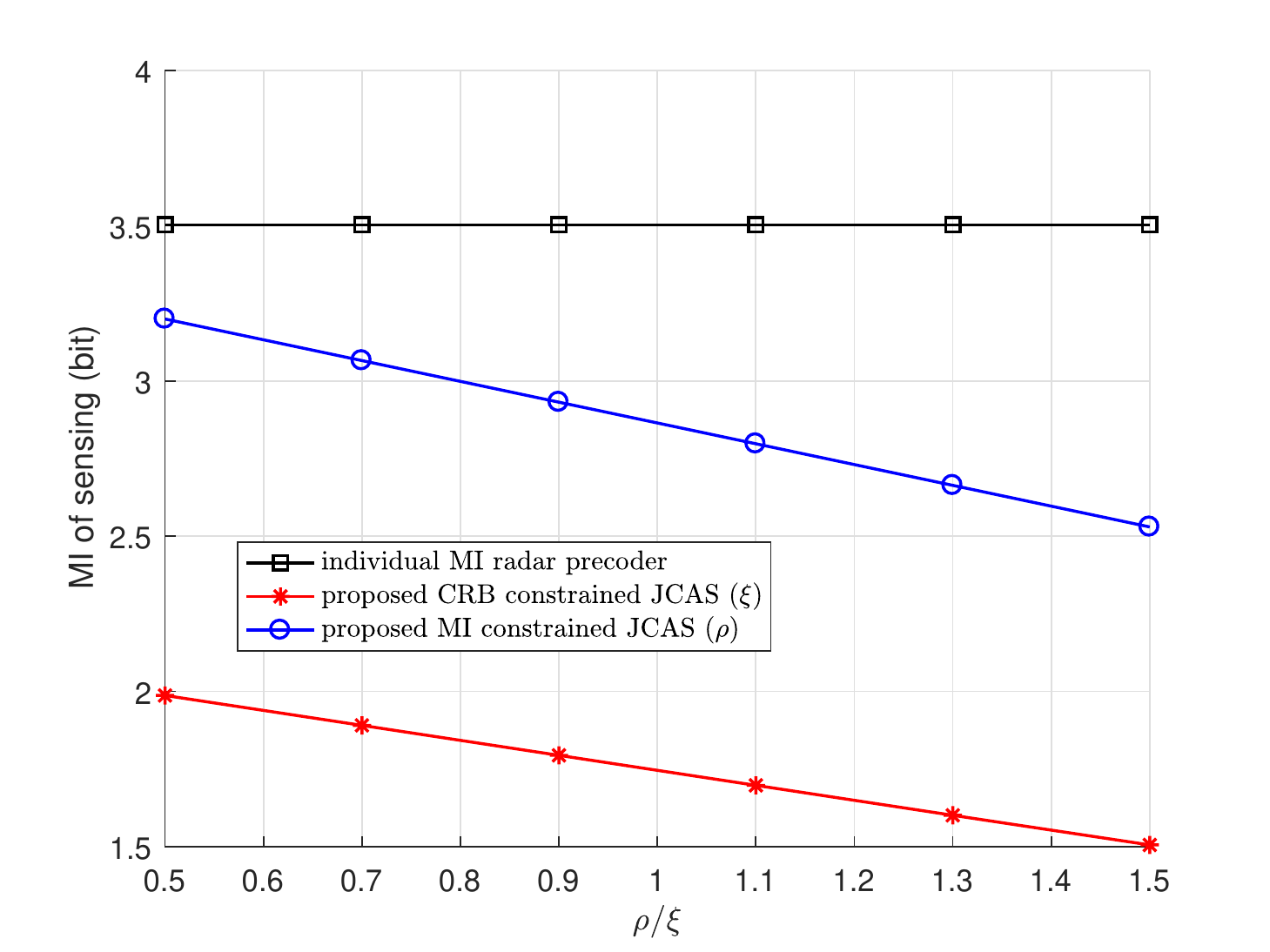}
    \caption{MI versus $\rho/\xi$  for sensing with employing individual and JCAS solutions.}
    \label{Fig_7}
\end{figure}

\begin{figure}[!ht]
    \centering
    \includegraphics[scale=0.8]{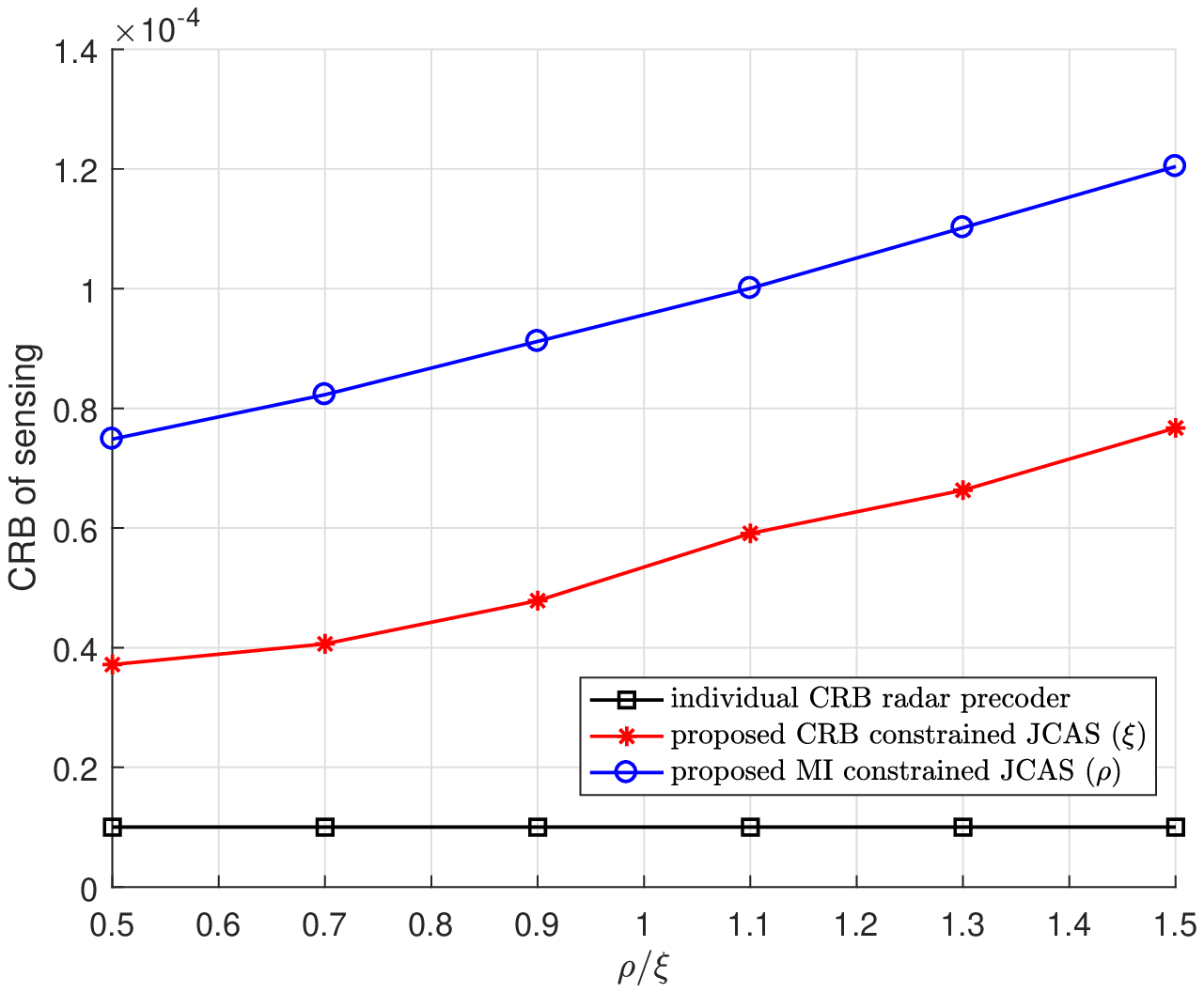}
    \caption{CRB versus $\rho/\xi$  for sensing with employing individual and JCAS solutions.}
    \label{Fig_8}
\end{figure}
In this section, we provide simulation results to validate the proposed algorithms, following the system and channel models introduced in Section II.  We simulate a JCAS system where a BS communicates with $U=2$ MUs each having $L_u=3$ paths and detects $L=3$ targets. The BS adopts a $16\times 1$ ULA as the transmit antennas and uses OFDM modulation to transmit a $2\times 1$ data symbol vector on each subcarrier. The number of subcarriers is $K=512$. The AoDs for both MUs and targets are randomly distributed from $-\pi$ to $\pi$.  The delay of each path is a random value ranging from 0 to $T_{\rm C}=0.1$ ms. The period of one OFDM symbol is $T=0.2$ ms.  For the parameter in the communication metric, we let $\mu=5$. For parameters in the proposed algorithms, we let $\epsilon=0.05$, $\mu\in[0.5, 1.5]$, and $\xi\in[0.5,1.5]$.

We separate the simulation results into three subsections. In the following subsection \ref{MI-JCAS}, we  show the achieved  sum rates for communications and the achieved MI for radar sensing by using our proposed Algorithm 1. In subsection \ref{CRB-JCAS}, we  verify the sum rates for communications and the CRB for radar sensing by using the proposed Algorithm 2. For comparison, we compare our proposed algorithms with the bench-marking solutions, including the optimal individual solutions and the weighted sum JCAS solution that is proposed in \cite{liuweight}. In subsection \ref{mutual-JCAS}, we aim to compare our proposed algorithms in parallel and figure out the impacts of using MI and CRB on optimizing JCAS waveforms.

\subsection{MI Constrained JCAS}\label{MI-JCAS}

Fig. \ref{Fig_1} shows how the sum rates vary with SNR using different JCAS solutions.  For the individual communication precoder, the sum rate remains the highest. It is undoubted since the precoder is optimized solely for maximizing the sum rate. Regarding the JCAS solutions, we compare our proposed Algorithm 1 with the weighted-sum JCAS solution in \cite{liuweight}. The weighted-sum JCAS solution needs the prior knowledge about the ECG. We see that with $\rho$ ranging from 1 to 1.5, our proposed precoder achieves better performance than the weighted-sum JCAS solution. Moreover, our proposed algorithms do not need to know the value of ECG. When ECG is unavailable, we see that the sum rate of the weighted-sum JCAS solution drops substantially. Our proposed algorithm guarantees that the ratio of ECG to MUI is larger than $\mu$. We can obtain the sum rate that is no smaller than $\mu$ at high SNRs.   The MUI has a dominant impact on the sum rate when SNR is high, hence our proposed precoder can achieve better performance by further increasing $\mu$.

Fig. \ref{Fig_2} unfolds the achieved MI of radar sensing versus SNR using different JCAS solutions. The system setup is the same as that in Fig. \ref{Fig_1}. The MI of individual MI radar precoder, i.e., $\pr[k]=\hs[k]{\bf \Lambda}[k]$, remains the highest.  Our proposed Algorithm 1 achieves better MI than that of \cite{liuweight} when $\rho$ is no larger than 1. Without obtaining ECG, the MI of weight-sum JCAS solution declines greatly. By decreasing $\rho$ to zero, our proposed solution can achieve the MI of individual MI radar precoder.  Together with Fig. \ref{Fig_1}, under the same power control, we notice that the sum rate is in proportion to $\rho$ while the MI is in inverse proportion to $\rho$. To guarantee both good performances for communications and sensing, $\rho$ should be nearly 1.

\subsection{CRB Constrained JCAS}\label{CRB-JCAS}

Fig. \ref{Fig_3} depicts how the sum rates vary with SNR using Algorithm 2 and other solutions. Similar to Fig. \ref{Fig_1},  the sum rate of the individual communication precoder remains the highest. Regarding the JCAS solutions, we compare our proposed Algorithm 2 with  the weighted-sum JCAS approach in \cite{liuweight}. We observe that, with $\xi$ ranging from 1 to 1.5, our proposed Algorithm 2 achieves better performance than the weighted-sum JCAS solution. Comparing with Fig. \ref{Fig_1}, We note that under the same control of $\rho (\xi)$, the achieved sum rate of Algorithm 2 is slightly better than that of Algorithm 1. We will discuss the gap later in Fig. \ref{Fig_6}.

Fig. \ref{Fig_4} illustrates the achieved CRB of radar sensing versus SNR using different JCAS solutions. The system setup is the same as that in Fig. \ref{Fig_1}. The individual CRB radar precoder achieves the lowest CRB among all solutions.  Our proposed Algorithm 2 achieves lower CRB than that of \cite{liuweight} no matter the ECG is available or not. By decreasing $\xi$ to zero, our proposed solution can achieve the optimal lowest CRB.  Together with Fig. \ref{Fig_3}, we can let $\xi$ range from $1$ to $1.5$ to guarantee both good performances for communication and sensing.
\subsection{Comparisons of Proposed MI/CRB Constrained JCAS}\label{mutual-JCAS}

Fig. \ref{Fig_5} illustrates an example of JCAS beam patterns with employing the proposed algorithms. To make the beam patterns clear to see, we only consider one MU and one target with only one path, i.e.,  $U=1$,   $L=1$, and $L_u=1$. For individual designs as illustrated in section III, we adopt the individual optimal MI radar precoder and the individual optimal communication precoder, respectively. We see that there is one main lobe located at around $-35\degree$ for the radar precoder, and one main lobe located at around $45\degree$ for the communication precoder. For JCAS designs, both our proposed algorithms have two main lobes, matching with the locations of individual designs tightly.   At the sidelobes, we see that our proposed JCAS solutions also have a good suppression, especially at the range of $(-30\degree, 30\degree).$

Next, we compare the proposed Algorithm 1 with Algorithm 2 under equal communication performance. This can unfold the impact of CRB and MI on the JCAS designs. To guarantee that Algorithm 1 and 2 achieve equal sum rates for communications, we simulate the sum rate versus threshold factors, ($\rho/\xi$), as shown in Fig. \ref{Fig_6}. The system set up is the same as that in Fig. \ref{Fig_1}, except that the SNR is fixed at 0 dB. It is clear to see that, under the equal value of the threshold factors, Algorithm 2 achieves slightly better sum rates that Algorithm 1. When two proposed algorithms achieve equal sum rates, $\rho$ needs to be about 0.11 larger than $\xi$. This result helps us to analyze the following two results.

In Fig. \ref{Fig_7}, we aim to verify if Algorithm 2 can achieve effective MI. The system set up is the same as that in Fig. \ref{Fig_6}. Note that Algorithm 2 is designed for minimizing CRB.  The MI of individual MI radar precoder remains the highest.  Referring to the result in Fig. \ref{Fig_6}, we note that $\rho$ should be about 0.1 larger than $\xi$ to make the two proposed algorithms achieve an equal level of sum rates for communications. We see that the MI achieved by Algorithm 2 is significantly lower than that of Algorithm 1. With the same communication performances, the MI of Algorithm 2  is less that than the optimal solution by almost 2 bits. This could be explained by noting that the individual CRB radar precoder is close to the array response matrix, $\bf A$, instead of $\hs[k]$, which means that one column could be truncated by using the CRB precoder and results in the decline of MI for using the CRB constrained precoder.

In Fig. \ref{Fig_8}, we aim to verify if Algorithm 1 can achieve effective CRB. The system set up is the same as that in Fig. \ref{Fig_6}. The CRB of individual CRB radar precoder remains the lowest. We observe that the CRB of Algorithm 1 is almost equal with that of Algorithm 2 and rises slightly with increasing $\rho/\xi$. When $\rho$ is 0.1 larger than $\xi$, the gap between the two algorithms is around $2\times 10^{-5}$. Even though the CRB of MI constrained algorithm is higher than that of the CRB constrained algorithm, the gap of $2\times 10^{-5}$ can be neglected. Referring to Fig. \ref{Fig_7}, we see that the MI constrained optimization can achieve good performances on both MI and CRB. More importantly, the MI constrained optimization has a closed-form solution while the CRB constrained optimization has not. Hence, we can obtain a conclusion that maximizing MI is more efficient and simpler than minimizing CRB for JCAS optimization.

\section{Conclusion}\label{sec-conc}

In this paper, we have proposed the JCAS waveform optimization algorithms that maximize the developed lower bound of communication sum rates under the constraint of either MI or CRB of radar.  The adopted radar metrics correspond to two types of individual optimal radar matrices, i.e., precoding matrix and covariance matrix. The developed lower bound of communication rates balances between MUI and ECG, and serves as a good substitute for SINR. We have established the connections between different constrained optimizations and showed that MI constrained JCAS solution is more efficient and less complicated than the CRB constrained JCAS solution.

\begin{appendices}
\section{Proof for Theorem \ref{T1}}\label{App0}
Given that $I_{k,u}+U\sigma_1^2/P\leq1\leq\frac{S_{k,u}}{S_{k,u}-\mu I_{k,u}}$  and $S_{k,u}-\mu I_{k,u}>0$, we have $S_{k,u}-\mu I_{k,u}\leq \frac{S_{k,u}}{I_{k,u}+U\sigma_1^2/P}$. Then, we have
\begin{align*}
 {\rm SINR}_{k,u}=&\frac{\frac PU|(\hc[k])^H{\pu_u[k]}|^2}{\frac PU \sum\limits_{v\neq u}|({\bf h}^{\rm C}_v[k])^H{\pu_u[k]}|^2+\sigma_1^2}\notag\\
             =&\frac{PS_{k,u}/U}{PI_{k,u}/U+\sigma_1^2}\notag\\
             =&\frac{ S_{k,u} }{ I_{k,u} +U\sigma_1^2/P}\notag\\
             \geq&S_{k,u}-\mu I_{k,u}\triangleq  J_{k,u},
\end{align*}
We note that the SINR is the sum of ${\rm SINR}_{k,u}$ and $J$ is the sum of $J_{k,u}$. Therefore, $J$ is a lower bound of SINR.
\section{Expression for FIM Subblocks}\label{App1}
Referring to the derivation in \cite{perfoB}, with including the precoding matrix, $\pr[k]$, the FIM subblocks are expressed as
\begin{align}
{\bf F}_1&=\frac{2}\sigma\sum\limits_{k=1}^K {\bf W}^H[k]{\bf A}^H\pr[k]\pr^H[k]{\bf A}{\bf W}[k]\notag\\
{\bf F}_2&=\frac{2}\sigma\sum\limits_{k=1}^K {\bf W}^H[k] ({\bf A}^H\pr[k]\pr^H[k]{\bf A})\odot{\bf X} {\dot{\bf W}[k]}\notag\\
{\bf F}_3&=\frac{2}\sigma\sum\limits_{k=1}^K {\bf W}^H[k]\left({\bf A}^H\pr[k]\pr^H[k]\dot{\bf A}  \right)\odot{\bf X}{\bf W}[k]\notag\\
{\bf F}_4&=\frac{2}\sigma\sum\limits_{k=1}^K   \dot{\bf W}^H[k]   ({\bf A}^H\pr[k]\pr^H[k]{\bf A})\odot({\bf X}^H{\bf 1}{\bf 1}^H{\bf X}) \dot{\bf W}[k] \notag\\
{\bf F}_5&=\frac{2}\sigma\sum\limits_{k=1}^K  \dot{\bf W}^H[k] \left({\bf A}^H\pr[k]\pr^H[k]\dot{\bf A} \right)\odot({\bf X}^H{\bf 1}{\bf 1}^H{\bf X}){\bf W}[k]\notag\\
{\bf F}_6&=\frac{2}\sigma\sum\limits_{k=1}^K{\bf W}^H[k] \left( \dot{\bf A}^H\pr[k]\pr^H[k]\dot{\bf A} \right)\odot({\bf X}^H{\bf 1}{\bf 1}^H{\bf X}){\bf W}[k],
\end{align}
where $\odot$ is Hadamard product,
\begin{align}
\dot{\bf W}[k] &=\left[\frac{\partial[{\bf W}[k]]_{:,1}}{\partial \tau_1},\cdots,\frac{\partial[{\bf W}[k]]_{:,L}}{\partial \tau_L}\right]\notag\\
&=-\frac{j2\pi k}T{\bf W}[k],
\end{align}
and
\begin{align}
\dot{\bf A} &=\left[\frac{\partial[{\bf A}]_{:,1}}{\partial \p_1},\cdots,\frac{\partial[{\bf A}]_{:,L}}{\partial \p_L}\right].
\end{align}
\section{Proof for Theorem \ref{T2}}\label{AppCC}
Note that $-J $ is a saddle face that is symmetric with respect to the origin point. We write $\pr$ in the real parts and imaginary parts, i.e.,
\begin{align}
&-J =\sum\limits_{u=1}^U{\rm Re}[\pu_u^H\rx_u\pu_u]\notag\\
&=\sum\limits_{u=1}^U\left[\begin{array}{c}
{\rm Re}[\pu_u]\\
{\rm Im}[\pu_u]\\
\end{array}\right]^H\left[\begin{array}{cc}
{\rm Re}[\rx_u]&{\rm Im}[\rx_u]\\
-{\rm Im}[\rx_u]&{\rm Re}[\rx_u]\\
\end{array}\right]
\left[\begin{array}{c}
{\rm Re}[\pu_u]\\
{\rm Im}[\pu_u]\\
\end{array}\right]\notag\\
&\triangleq\sum\limits_{u=1}^U{\bar\pu}_u^H{\bar\rx}_u{\bar\pu}_u.
\end{align}
Supposed that there is an initial $\{\bar\pu_u^0\}_{u=1}^U$ that satisfies both $\psi(\pr)<U$ and $\psi'(\pr)<\rho$, there exists a new column, $\bar\pu_u^1=\bar\pu_u^0+\epsilon\bar{\bf v}_{u,R}$, where $\bar{\bf v}_{u,R}$ is the $R$th eigen vector of $\bar\rx_u$ with $R$ being the rank of $\rx_u$. The variable, $\epsilon>0$, is small enough to make the new column inside the constraints. Note that $\bar\rx_u$ is not a semi-definite matrix, the $R$th eigen value is negative, denoted as $r_R$, (when all eigen values are sorted in order). Hence, the new column,  $\bar\pu_u^1$, satisfies
\begin{align}\label{threeterm}
-J(\bar\pu_u^1)=-J(\bar\pu_u^0)+2r_R\epsilon(\bar\pu_u^0)^H\bar{\bf v}_{u,R}+\epsilon^2r_R.
\end{align}
where $-J(\bar\pu_u)$ means the value of $-J$ with respect to $\pu_u$.  It is clear that the third term of \eqref{threeterm} is negative.
If the second term is also negative, we have $-J(\bar\pu_u^1)<-J(\bar\pu_0)$. Hence,   $-J$ keeps dropping until $\pr$ is on the surface of the constraints. If the second term in \eqref{threeterm} is positive, then we choose another column, i.e., $\bar\pu_u^{1'}=\bar\pu_u^0-\epsilon\bar{\bf v}_R$, then the second term becomes negative. Therefore, the optimal  $\pr$ is on the surface of the constraints.
\section{Proof for Corollary \ref{C1}}\label{App2}
According to Theorem \ref{T2}, the optimal point of ${\bf P}$ is on the surface of constraints. When $f_u\bar{\bf v}_{u,R}$ is also on the surface of constraints, to further decrease the objective function from $\bar\pu_u^0=f_u\bar{\bf v}_{u,R}$, the possible value of $\bar\pu_u$ is given by
$\bar\pu_u^1=\bar\pu_u^0+\sum_{i=1}^{R-1}\epsilon_i\bar{\bf v}_{u,i}$, where $\bar{\bf v}_{u,i}$ is the $i$th left eigenvector of $\rx_u$. From the expression of $\rx_u=\tilde{\bf H}_u\tilde{\bf H}_u^H-{\bf h}_u{\bf h}_u$, there is only one negative eigenvalue in $\rx_u$, i.e., $r_R$. Substituting
$\bar\pu_u^1$ into the objective function, we have
\begin{align}
-J(\bar\pu_u^1)=-J(\bar\pu_u^0)+\sum\limits_{i=1}^{R-1}\epsilon_i^2 r_i>-J(\bar\pu_u^0).
\end{align}
Hence, the possible value of $\bar\pu_u$ cannot make the objective function keep dropping. Therefore,  when $f_u\bar{\bf v}_{u,R}$ is also on the surface of constraints, the closed-form solution of $\pu_u$ is solved as $f_u\bar{\bf v}_{u,R}$.
\section{Proof for Corollary \ref{C2}}\label{App3}
 According to Theorem \ref{T2}, the optimal point of ${\bf P}$ is on the surface of constraints. For simplicity, we assume that the optimal $\pr$ is achieved when $\psi'(\pr)=\rho$. In this case, the surface of $-J(\pr)=J_0$ and the surface of $\psi'(\pr)=\rho$ are tangent with each other. Otherwise, there will be a point of $\pr$ that satisfies $-J(\pr)=J_0$ and $\psi'(\pr)<\rho$, which is contradictory with Theorem 2. Therefore, the optimal point of $\pr$ is on the tangent plane of $-J(\pr)=J_0$ with $J_0$ being the minimal value.
\end{appendices}

\bibliographystyle{IEEEtran}
% Generated by IEEEtran.bst, version: 1.13 (2008/09/30)

\end{document}